\documentclass[a4paper,onecolumn,11pt,accepted=2022-03-07]{quantumarticle}
\pdfoutput=1
\usepackage[utf8]{inputenc}
\usepackage[english]{babel}
\usepackage[T1]{fontenc}
\usepackage{amsmath}
\usepackage{hyperref}

\usepackage{tikz}
\usepackage{lipsum}

\usepackage[numbers]{natbib}
\usepackage{amssymb, amsthm, amsmath, dsfont, mathtools}
\usepackage{qcircuit}

\newcommand{\replacement}[1]{#1}
\newtheorem{proposition}{Proposition}[section]
\newtheorem{lemma}[proposition]{Lemma}
\newtheorem{theorem}[proposition]{Theorem}

\newtheorem{corollary}[proposition]{Corollary}
\newtheorem{fact}[proposition]{Fact}

\newcommand{\Pn}{\mathbb{P}^{2^n-1}}

\newcommand{\Fn}{\mathbb{F}_2^n}
\newcommand{\field}{\mathbb{F}}

\newcommand{\id}{\mathds{1}}

\newcommand{\setft}[1]{\mathrm{#1}}
\newcommand{\Trans}{\setft{T}}
\newcommand{\reals}{\mathbb{R}}
\newcommand{\complex}{\mathbb{C}}
\newcommand{\naturals}{\mathbb{N}}
\DeclareMathOperator{\spn}{span}
\newcommand{\stabn}{\setft{Stab}_n}
\newcommand{\cstabn}{\check{\setft{S}}\setft{tab}_n}

\newcommand{\Unitary}{\setft{U}}

\newcommand{\aCliff}{\setft{Cliff}}
\newcommand{\proj}{\mathbb{P}}

\newcommand{\pphin}{[\phi^{\otimes n}]}

\newcommand{\psin}{\psi^{\otimes n}}
\newcommand{\taun}{\tau^{\otimes n}}
\newcommand{\ppsi}{[\psi]}
\newcommand{\pphi}{[\phi]}

\newcommand{\ceil}[1]{\lceil #1 \rceil}

\newcommand{\floor}[1]{\lfloor #1 \rfloor}

\newcommand{\tinyspace}{\mspace{1mu}}

\newcommand{\norm}[1]{\lVert\tinyspace #1 \tinyspace\rVert}
\newcommand{\bignorm}[1]{\bigl\lVert\tinyspace #1 \tinyspace\bigr\rVert}

\newcommand{\biggnorm}[1]{\biggl\lVert\tinyspace #1 \tinyspace\biggr\rVert}

\newcommand{\abs}[1]{\lvert #1 \rvert}
\newcommand{\bigabs}[1]{\bigl\lvert #1 \bigr\rvert}

\newcommand{\V}{\mathcal{V}}

\newcommand{\ket}[1]{#1}

\newcommand{\Sp}{\mathcal{S}}
\newcommand{\e}{\mathrm{e}}
\newcommand{\Spn}{\Sp(\V_n)}


\begin{document}
\mathtoolsset{showonlyrefs}

\title{New techniques for bounding stabilizer rank}

\author{Benjamin Lovitz}
\email{benjamin.lovitz@gmail.com}
\affiliation{Institute for Quantum Computing and Department of Applied Mathematics, University of Waterloo, 200 University Ave W, Waterloo, ON, Canada}
\orcid{0000-0002-2445-2701}
\author{Vincent Steffan}
\email{sv@math.ku.dk}
\affiliation{QMATH, Department of Mathematical Sciences, University of Copenhagen, Universitetsparken 5, 2100 Copenhagen, Denmark}
\orcid{0000-0003-4913-6833}

\maketitle

\begin{abstract}
In this work, we present number-theoretic and algebraic-geometric techniques for bounding the stabilizer rank of quantum states. First, we refine a number-theoretic theorem of Moulton to exhibit an explicit sequence of product states with exponential stabilizer rank but constant approximate stabilizer rank, and provide alternate (and simplified) proofs of the best-known asymptotic lower bounds on stabilizer rank and approximate stabilizer rank, up to a log factor. Second, we find the first non-trivial examples of quantum states with multiplicative stabilizer rank under the tensor product. Third,  we introduce and study the generic stabilizer rank using algebraic-geometric techniques.
\end{abstract}
 \section{Introduction}

It is of great practical importance to determine the classical simulation cost of quantum computations. Indeed, lower bounds on the simulation cost indicate quantum speedups, while upper bounds can help us to understand the limitations of quantum computation. The \textit{stabilizer rank} is a useful barometer for the computational cost of classically simulating quantum circuits under the stabilizer formalism~\cite{PhysRevX.6.021043}. A \textit{stabilizer state} is a quantum state in the orbit of a computational basis state under the action of the Clifford group (which is defined in Section~\ref{sec:basicprops}). For a (pure) quantum state $\psi$, we define its \textit{stabilizer rank}, denoted $\chi(\psi)$, to be the smallest integer $r$ for which $\psi$ can be written as a superposition of $r$ stabilizer states. The stabilizer rank is motivated by the fact that the classical simulation cost of applying Clifford gates and computational basis measurements to $\psi$ under current state-of-the-art simulation protocols scales polynomially in the number of Clifford gates and $\chi(\psi)$ (see Appendix~\ref{app:motivationofstabrank}, and~\cite{PhysRevX.6.021043,BravyiImprovedSimDominated,Bravyilowrankstabdecpmps, 2021}). For a real number $\delta >0$, the $\delta$-\textit{approximate stabilizer rank}, $\chi_{\delta}(\psi)$, is defined as the minimum stabilizer rank over all quantum states that are $\delta$-close to $\psi$, and similarly quantifies the classical simulation cost of approximating the application of Clifford gates and computational basis measurements to $\psi$ under the stabilizer formalism.

The $T$\textit{-count} of a quantum state $\psi$ is the number of $T$-gates needed to prepare $\psi$ using a circuit consisting only of Clifford+$T$-gates and post-selective computational basis measurements (we define the $T$-gate in Appendix~\ref{app:motivationofstabrank}). It is known that if $\psi$ has $T$-count $n$, then ${\chi(\psi)\leq \chi(T^{\otimes n})}$, where $T = \frac{1}{\sqrt{2}}( \mathrm{e}_0 + e^{i \pi /4} \mathrm{e}_1)$ is the so-called $T$\textit{-state}, and $\e_0, \e_1$ are the computational basis vectors in $\complex^2$~\cite{PhysRevX.6.021043}. It is therefore of particular interest to determine $\chi(T^{\otimes n})$ and $\chi_{\delta}(T^{\otimes n})$.

Despite the practical importance of the stabilizer rank, few techniques are known for bounding this quantity.
In this work, we introduce techniques from number theory and algebraic geometry for bounding the stabilizer rank. In particular, we:
\begin{enumerate}
\item Refine a theorem of Moulton on subset-sum representations of exponentially increasing sequences, and use this refinement to prove lower bounds on exact and approximate stabilizer rank. In particular, we:
\begin{itemize}
\item Prove that for any non-stabilizer qubit state $\psi$, it holds that $\linebreak{\chi(\psi^{\otimes n}) = \Omega (n/\log_2 n)}$, and in particular, ${\chi(T^{\otimes n}) \geq \frac{n+1}{4 \log_2 (n+1)}.}$ Our asymptotic scaling matches the best-known lower bound ${\chi(T^{\otimes n})\geq n/100}$ up to a log factor~\cite{peleg2021lower}.
\item Prove that for any non-stabilizer qubit state $\psi$, there exists a constant $\delta>0$ for which ${\chi_\delta(\psin) \geq \sqrt{n}/(2 \log_2 n)}$ for all $n\geq 2$. In particular, our asymptotic scaling for the $T$-state matches the best-known lower bound\linebreak ${\chi_{\delta}(T^{\otimes n})=\Omega(\sqrt{n}/\log_2 n)}$~\cite{peleg2021lower}.
\item Exhibit, for any fixed $\delta>0$, an explicit sequence of $n$-qubit product states of stabilizer rank $2^n$ (the largest possible) and $\delta$-approximate stabilizer rank $1$ (the smallest possible).
\end{itemize}
     \item Explicitly construct the first non-trivial examples of quantum states with multiplicative stabilizer rank under the tensor product.
\item Introduce and study the \textit{generic stabilizer rank}, which upper bounds $\chi(T^{\otimes n})$, using algebraic-geometric techniques.
\end{enumerate}

In the remainder of this introduction, we expand on points 1, 2, and 3, and identify directions for future work. In Section~\ref{sec:basicprops} we review some mathematical preliminaries for this work. In Sections~\ref{sec:lowerbound},~\ref{sec:multiplicative}, and~\ref{sec:genericstabilizerrank} we prove the results introduced in points 1, 2, and 3, respectively. In Appendix~\ref{app:motivationofstabrank} we review motivation for the stabilizer rank as a measure of the computational cost of classical simulations of quantum circuits.

\subsection{Lower bounds on stabilizer rank and approximate stabilizer rank}\label{sec:intro_lower}

In Section~\ref{sec:lowerbound}, we refine a number-theoretic theorem of Moulton to prove lower bounds on stabilizer rank and approximate stabilizer rank~\cite{MOULTON2001193}. Let $[n]=\{1,\dots, n\}$ when $n$ is a positive integer. For integers $q\geq 2$ and $r\geq 1$, and tuples of non-zero complex numbers
\begin{align*}
\alpha&=(\alpha_1,\dots, \alpha_q)\in \complex^q\\
\beta&=(\beta_1,\dots, \beta_r)\in \complex^r,
\end{align*}
we say that $\beta$ is a \textit{subset-sum representation} of $\alpha$ if for all $i \in [q]$ there exists a subset $R_i \subseteq [r]$ for which $\sum_{j \in R_i} \beta_j=\alpha_i$. We refer to the integer $r$ as the \textit{length} of the subset-sum representation $\beta \in \complex^r$. For an integer $2\leq p \leq q$, we say that $\alpha \in \complex^q$ has an \textit{exponentially increasing subsequence} of length $p$ if there exists $i_1,\dots, i_p \in [q]$ for which
\begin{align}\label{eq:expintro}
\abs{\alpha_{i_{j+1}}} \geq 2 \abs{\alpha_{i_j}}\quad \text{for all} \quad j \in [p].
\end{align}
Moulton's theorem states that any subset-sum representation of a $q$-tuple containing the subsequence $(1, 2, 4,\dots, 2^{p-1})$ has length at least $p/\log_2 p$~\cite{MOULTON2001193}. We refine this result to prove that the same bound holds for any $q$-tuple that contains an exponentially increasing subsequence of length $p$
(Theorem~\ref{moulton}).

Since stabilizer states have coordinates in $\{0,\pm1, \pm i\}$ in the computational basis (see Section~\ref{sec:basicprops}), any decomposition of a state $\psi$ into a superposition of $r$ stabilizer states can be converted into a length-$4r$-subset-sum representation of the coordinates of $\psi$. It follows that if the coordinates of $\psi$ contain an exponentially increasing subsequence of length $p$, then $\chi(\psi) \geq p/(4\log_2 p)$ (Theorem~\ref{thm:general_linear_lower_bound}). In particular, since $T$ is Clifford-equivalent to the \textit{$H$-state} $H \propto  \e_0+\frac{1}{\sqrt{2}-1}\e_1$, and the coordinates of $H^{\otimes n}$ contain an exponentially increasing subsequence of length $n+1$, we obtain $\chi(T^{\otimes n})\geq \frac{n+1}{4 \log_2 (n+1)}$. By a similar argument, we prove that $\chi(\psin)= \Omega (n/\log_2 n)$ for any non-stabilizer qubit state $\psi$ (Theorem~\ref{thm:general_linear_lower_bound}).

We further use Theorem~\ref{moulton}, along with standard concentration inequalities for the binomial distribution, to prove that for any non-stabilizer qubit state $\psi$ there exists a constant $\delta>0$ for which it holds that $\chi_\delta(\psin) \geq \sqrt{n}/(2\log_2 n)$ for all $n \geq 2$ (Theorem~\ref{thm:approximate stabilizer rank}).

Very recently, Peleg et al.~obtained similar results: They proved that $\chi(T^{\otimes n})\geq n/100$, and that there exists $\delta>0$ for which $\chi_{\delta}(T^{\otimes n})= \Omega (\sqrt{n}/\log_2 n)$~\cite{peleg2021lower}. Asymptotically, our bounds match theirs up to a log factor, and we suggest that our proof technique is much simpler. While both of our bounds follow quite quickly from our refinement of Moulton's theorem mentioned above, the two bounds of Peleg et al.~use two different approaches from the analysis of boolean functions and complexity theory: For their lower bound on $\chi(T^{\otimes n})$, they analyze directional derivatives of quadratic polynomials, and for their lower bound on $\chi_{\delta}(T^{\otimes n})$, they use Razborov-Smolensky low-degree polynomial approximations and correlation bounds against the majority function~\cite{zbMATH04029460,10.1145/28395.28404,10.1109/SFCS.1993.366874}. It is interesting that the vastly different approaches of ours and Peleg et al. yield such similar results.

As a further application of our refinement of Moulton's theorem, we explicitly construct a sequence of $n$-qubit product states $\psin$ for which it holds that $\chi(\psin) \geq \frac{2^n}{4 n}$ and ${\chi_{\delta}(\psin)=\mathcal{O}(1)}$ for any $\delta>0$, simply by writing down a product state with exponentially increasing coordinate amplitudes. \replacement{Using different techniques, in Proposition~\ref{prop:stabilizerrankmaximal} we construct a sequence of $n$-qubit product states $\psi^{\otimes n}$ for which $\chi(\psin)=2^n$ (the largest possible) and $\chi_{\delta}(\psin)=1$ (the smallest possible).} These results lie in contrast to the situation for other notions of rank, in which it is a difficult open problem to explicitly construct sequences of states of near-maximal rank. For example, the maximum~\textit{border rank}, a relevant notion of rank in classical complexity theory, of a quantum state in three local spaces of (affine) dimensions $d$, is $\ceil{\frac{d^3}{3d-2}}$ for all $d \neq 3$, 
whereas the largest border rank of any known explicit sequence of states in this space is only linear in $d$ (see~\cite{Landsberg2019TowardsFH} for the largest known border rank and~\cite{Landsberg2011TensorsGA} for a general introduction to the topic).

\subsection{States with multiplicative stabilizer rank under the tensor product}
It is a standard fact that the stabilizer rank is \textit{sub-multiplicative} under the tensor product, i.e. $\chi(\psi \otimes \psi) \leq \chi(\psi)^2$ for any quantum state $\psi$~\cite[Section 2.1.3]{hammam}. In~\cite[Section 4.4]{hammam} it was remarked that there are no known examples of quantum states $\psi$ of stabilizer rank greater than one for which equality holds. In Section~\ref{sec:multiplicative}, we explicitly construct two-qubit states $\psi$ for which $\chi(\psi)=2$ and $\chi(\psi \otimes \psi) = 4.$ This is the smallest possible example of such a state, since for any single-qubit state $\phi$ it holds that $\chi(\phi \otimes \phi) \leq 3$.

\subsection{Generic stabilizer rank}

For any positive integer $n$, all but finitely many qubit states $\psi$ (up to phase) maximize $\chi(\psin)$ (Fact~\ref{fact:generic}). This motivates us to define the $n$-th \textit{generic stabilizer rank}, denoted $\chi_n$, to be the maximum stabilizer rank of any state of the form $\psin$. In Section~\ref{sec:genericstabilizerrank}, we prove new bounds on $\chi_n$, along with some useful reductions for studying this quantity. In Proposition~\ref{cor:improved} we modestly improve the best-known upper bound on $\chi_n$, recently obtained by Qassim et al., from $\mathcal{O}((n+1)2^{n/2})$ to $\mathcal{O}(2^{n/2})$~\cite[Theorem 2]{2021}. In Propositions~\ref{method} and~\ref{real_reduction} we prove two useful reductions for studying $\chi_n$, namely, that there must exist a single set of $\chi_n$ stabilizer states that span the symmetric subspace, and that it suffices to work over the real numbers. In Proposition~\ref{prop:another_upper} we introduce a technique for upper bounding $\chi_n$ when upper bounds on $\chi(\psin)$ are known for sufficiently many (linear in $n$) pairwise non-collinear qubit states $\psi$. In Proposition~\ref{prop:upper_bound} we compute an upper bound on the (finite) number of qubit states $\psi$ (up to phase) for which $\chi(\psin)<\chi_n$.

\subsection{Directions for future work}
We believe that our work opens the door for new approaches on questions related to the stabilizer rank. Here, we present two of the most promising ones.

\subsubsection{Lower bounds on stabilizer rank via the $T$-count}
Recall that for any quantum state $\psi_n$ of $T$-count $n$, it holds that $\chi(T^{\otimes n})\geq \chi(\psi_n)$~\cite{PhysRevX.6.021043}. Also note that, from the proof of Theorem~\ref{thm:general_linear_lower_bound}, if the smallest subset-sum representation of $\psi_n$ has size at least $a_n \in \naturals$, then $\chi(\psi_n)\geq \frac{1}{4}a_n.$ This suggests the following technique for obtaining lower bounds on $\chi(T^{\otimes n})$ (based on a technique introduced in \cite{PhysRevX.6.021043}): Find a sequence of states $\psi_n$ of $T$-count $n$ with no subset-sum representation of size less than $a_n$, where $a_n$ grows suitably quickly in $n$. Indeed, this is the technique we use to obtain our linear lower bound in Theorem~\ref{thm:general_linear_lower_bound}: We find a sequence of states $\psi_n\propto (\e_0+\frac{i}{\sqrt{2}-1}\e_1)^{\otimes n}$ of $T$-count $n$ and no subset-sum representation of size less than $\frac{n+1}{\log_2(n+1)}$, where the latter property is ensured from the fact that $\psi_n$ contains an exponentially increasing sequence of length $n+1$ (see Theorem~\ref{moulton}). It would be interesting to see if this technique can be used to obtain super-linear lower bounds on $\chi(T^{\otimes n}).$ Unfortunately, by \cite[Proposition 5.8]{beverland2020lower}, any state of $T$-count $n$ can have an exponentially increasing sequence of length at most $O(n)$, so one would need to find another way to lower bound the size of a subset-sum representation.


\subsubsection{Bounds on generic stabilizer rank}
In Section~\ref{sec:genericstabilizerrank} we prove several reductions for studying the generic stabilizer rank $\chi_n$, but only manage to modestly improve the best-known bounds on this quantity. We ask whether stronger bounds can be obtained from our reductions.

\section*{Acknowledgments}
\replacement{We thank the second referee for the key ideas in the proof of Proposition~\ref{prop:stabilizerrankmaximal}}. We thank Gerry Myerson for pointing us to the work of Moulton~\cite{MOULTON2001193}. BL thanks Kieran Mastel and William Slofstra for helpful discussions. VS thanks Daniel Stilck Fran\c{c}a and Matthias Christandl for helpful discussions. \replacement{BL acknowledges financial support from the University of Waterloo and the Government of Ontario through an Ontario Graduate Scholarship.} VS acknowledges financial support from VILLUM FONDEN via the QMATH Centre of Excellence (Grant No. 10059).

\section{\replacement{Background}}\label{sec:basicprops}

For a complex vector space $\V$, let $\Sp(\V)$ be the set of unit vectors in $\V$ (with respect to the Euclidean norm), let $\Unitary(\V)$ be the set of unitary operators on $\V$, and let $\proj(\V)$ be the set of one-dimensional linear subspaces of $\V$. In precise terms, a pure quantum state is a unit vector modulo phase, i.e. an element of $\Sp(\V)/\Unitary(\complex)$. However, to match historical notation in quantum information, we will define a \textit{(pure, quantum) state} to be an element of $\Sp(\V)$, and define a \textit{(pure, quantum) state mod phase} to be an element of $\Sp(\V)/\Unitary(\complex)$. Under the canonical identification between this quotient and $\proj(\V)$, which sends a unit vector to its span, we also refer to elements of $\proj(\V)$ as states mod phase. This endows the set of states mod phase (and relevant subsets) with the structure of an algebraic variety. We will use this structure in Section~\ref{sec:genericstabilizerrank} to study the generic stabilizer rank.

For a vector $\psi \in \V$, let $[\psi] \in \proj(\V)$ be the subspace spanned by $\psi$ (not to be confused with $[n]:=\{1,\dots,n\}$ when $n$ is a positive integer). We say that two non-zero vectors $\psi, \phi \in \V$ are \textit{collinear} if $[\psi]=[\phi]$. For a subset $X \subseteq \proj \V$, let $\hat{X} \subseteq \V$ be the affine cone over $X$, i.e. $\hat{X}=\{\psi \in \V : [\psi] \in X\} \cup \{0\}$. For a positive integer $n$, let $\proj^n=\proj(\complex^{n+1})$ and $\V_n=(\complex^2)^{\otimes n}$. We refer to the individual copies of $\complex^2$ that make up $\V_n$ as \textit{subsystems}. We refer to elements of $\V_n$ as \textit{tensors} to emphasize the multipartite structure of $\V_n$. Let $S^n(\complex^2) \subseteq \V_n$ be the \textit{symmetric subspace}: the linear subspace of tensors that are invariant under permutation of the subsystems. Note that $\proj(\V_n)=\proj^{2^n-1}$.

For a positive integer $n$, let $\setft{Pauli}_n \subseteq \setft{U}(\V_n)$ be the \textit{Pauli group}, the group generated by all $n$-fold tensor products of elements of the set $\{ X, Z,i \id_2 \}\subseteq \setft{U}(\complex^2)$, where
\begin{equation*}
    X = \begin{pmatrix}0&1\\1&0\end{pmatrix}, \quad Z = \begin{pmatrix}1&0\\0&-1\end{pmatrix},
\end{equation*}
and $\id_k$ is the identity matrix in $\setft{U}(\complex^k)$ for any positive integer $k$. The~\textit{Clifford group}, denoted $\aCliff_n \subseteq \setft{U}(\V_n)$, is the normalizer of the Pauli group in $\setft{U}(\V_n)$. 
The Clifford group is generated by all tensor products of elements of the set $\{H, S, CNOT\}$, along with global phase gates in $\Unitary(\complex)$, where
\begin{equation*}
    H = \frac{1}{\sqrt{2}}\begin{pmatrix}1&1\\1&-1\end{pmatrix},\quad S = \begin{pmatrix}1&0\\0&i\end{pmatrix},\quad \text{and} \quad CNOT = \begin{pmatrix}1&0&0&0\\0&1&0&0\\0&0&0&1\\0&0&1&0\end{pmatrix}.
\end{equation*}

A quantum state $\psi \in \Sp(\V)$ is called a \emph{stabilizer state} if $\psi = U \mathrm{e}_0^{\otimes n}$ for some $U\in \aCliff_n$. Let $\stabn=\aCliff_n \mathrm{e}_0^{\otimes n}$ be the set of stabilizer states, and let $\cstabn=\{[\psi] : \psi \in \stabn\} \subseteq \Pn$ be the set of stabilizer states mod phase. It is well known that a state $\psi \in \Sp(\V_n)$ is a stabilizer state if and only if
\begin{equation}\label{eq:normalformofstabstate}
    \psi \propto \sum_{x \in A} i^{l(x)} \cdot (-1)^{q(x)} \cdot \mathrm{e}_x 
\end{equation}
for some affine linear subspace $A\subseteq \Fn$, linear form $l\colon \Fn \rightarrow \mathbb{F}_2$, and quadratic form $q\colon \Fn \rightarrow \mathbb{F}_2$ \cite{PhysRevA.68.042318,10.5555/2011350.2011356}. Here and throughout, we define $\e_x:=\e_{x_1} \otimes \dots \otimes \e_{x_n}$ when $x \in \Fn$.

The \emph{stabilizer rank} of a state $\psi \in \Sp(\V_n)$, denoted $\chi (\psi)$, is the smallest integer $r$ for which
\begin{equation*}
\psi = \sum_{i = 1}^r c_i \sigma_i
\end{equation*}
for some complex numbers $c_i \in \complex$ and stabilizer states $\sigma_i \in \stabn$. We denote the set of states mod phase of stabilizer rank at most $r$ by ${\Sigma}_r(\cstabn) \subseteq \Pn$ ($\Sigma_r(\cdot)$ denotes the $r$-th secant variety, see \cite[Example 8.5]{harris2013algebraic}). Note that $\cstabn$ is \textit{non-degenerate}, i.e. $\spn(\cstabn)=\Pn$.

We also have an approximate version of stabilizer rank. For a positive real number $\delta > 0$ and state $\psi \in \Sp(\V_n)$, we define the $\delta$\textit{-approximate stabilizer rank} of $\psi$, denoted $\chi_{\delta}(\psi)$, as
\begin{equation*}
\chi_\delta (\psi)=\min_{\phi \in \Sp(\V_n)}\big\{ \chi(\phi) : \bignorm{\phi- \psi} \leq \delta \big\}.
\end{equation*}

We say a quantum state $\psi\in \Sp(\V_n)$ is \textit{real} if $\psi$ is proportional to a state with only real coordinates in the computational basis. A quantum state $\psi$ is a real stabilizer state if and only if it can be written in the form~\eqref{eq:normalformofstabstate} with $l=0$. The set of real stabilizer states in $\Sp(\V_n)$, which we denote by $\stabn^\reals$, is precisely the orbit of $e_0^{\otimes n}$ under the group generated by $\Unitary(\complex)$, $H$ and $CNOT$~\cite{PhysRevA.68.042318}. We denote by $\cstabn^\reals$ the set of real stabilizer states mod phase. For a quantum state $\psi \in \Sp(\V_n)$, we define the \textit{real stabilizer rank} of $\psi$, denoted $\chi^{\reals}(\psi)$, to be the smallest integer $r$ for which $\psi$ can be written as a (complex) superposition of $r$ real stabilizer states.

We close this section by computing the number of stabilizer states and real stabilizer states mod phase, which we will use in Section~\ref{sec:genericstabilizerrank} to upper bound the number of states mod phase of sub-generic rank. It is a standard fact that there are $\binom{n}{k}_2$ distinct $k$-dimensional linear subspaces of $\field_2^n$, where
\begin{align*}
\binom{n}{k}_2=\prod_{i=0}^{k-1} \frac{2^{n-i}-1}{2^{k-i}-1}
\end{align*}
is the \textit{Gaussian binomial coefficient} (see e.g.~\cite{goldman1970foundations}). Since there are $2^{n-k}$ distinct affine translations of a $k$-dimensional linear subspace of $\Fn$, it follows that there are $\binom{n}{k}_2 2^{n-k}$ distinct affine linear subspaces of dimension $k$. For each index $k \in [n]$ and each $k$-dimensional affine subspace $A \subseteq \Fn$, there are $2^{k(k+1)/2}$ distinct quadratic forms on $A$ and $2^k$ distinct linear functions on $A$. It follows that
\begin{align*}
\abs{\cstabn}&=2^n \sum_{k=1}^n \binom{n}{k}_2 2^{k(k+1)/2}\\
			&=2^n \prod_{k=1}^n (2^k+1),
\end{align*}
where the second line follows from the Gaussian binomial theorem~(see e.g.~\cite{goldman1970foundations}). Similarly,
\begin{align*}
\bigabs{\cstabn^{\reals}}=2^n \sum_{k=1}^n \binom{n}{k}_2.
\end{align*}
The quantity $\abs{\cstabn}$ was previously computed in~\cite[Corollary~21]{gross2006hudson} using a different proof technique.

\section{Lower bounds on stabilizer rank and approximate stabilizer rank}\label{sec:lowerbound}

In this section, we refine a number-theoretic theorem of Moulton, and use this to prove lower bounds on the stabilizer rank and approximate stabilizer rank. Recall the definitions of exponentially increasing subsequences and subset-sum representations given in Section~\ref{sec:intro_lower}. Moulton proved that any subset-sum representation of a $q$-tuple containing the subsequence $(1, 2, 4, \dots, 2^{p-1})$ has length at least $p/\log_2 p$~\cite{MOULTON2001193}. In Section~\ref{sec:moulton}, we refine this result to prove that the same bound holds for any $q$-tuple containing an exponentially increasing subsequence of length $p$ (Theorem~\ref{moulton}). In Section~\ref{sec:exact} we use our refinement to prove that any quantum state whose coordinates contain an exponentially increasing subsequence of length $p$ has stabilizer rank at least $p/(4\log_2 p)$ (Theorem~\ref{thm:general_linear_lower_bound}). We then use this result to explicitly construct a sequence of product states of exponential stabilizer rank, to prove that $\chi(T^{\otimes n}) \geq \frac{n+1}{4 \log_2 (n+1)}$, and to prove that $\chi(\psin) = \Omega(n/\log_2 n)$ for any non-stabilizer qubit state $\psi$. In Section~\ref{sec:approx} we use our refinement of Moulton's theorem to prove that, for any non-stabilizer qubit state $\psi$, there exists $\delta>0$ for which $\chi_{\delta}(\psi^{\otimes n})\geq \sqrt{n}/(2\log_2 n)$ for all $n \in \mathbb{N}$. \replacement{In Section~\ref{sec:exp} we use our refinement of Moulton's theorem, as well as independent field-theoretic techniques, to construct explicit sequences of product states of exponential stabilizer rank but constant approximate stabilizer rank.}

\subsection{A refinement of Moulton's theorem}\label{sec:moulton}
\begin{theorem}[Refinement of Theorem~1 in~\cite{MOULTON2001193}]\label{moulton}
Let $2 \leq p \leq q$ be integers, and let $\alpha \in \complex^q$ be a $q$-tuple of non-zero complex numbers. If $\alpha$ contains an exponentially increasing subsequence of length $p$, then any subset-sum representation of $\alpha$ has length at least $p/\log_2(p)$.
\end{theorem}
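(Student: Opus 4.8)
The plan is to reduce the general statement to the special case of a geometrically increasing subsequence $(1,2,4,\dots,2^{p-1})$ already handled by Moulton, by showing that an arbitrary exponentially increasing subsequence can be ``rounded'' to a geometric one without destroying the subset-sum structure. First I would extract from $\alpha$ the indices $i_1,\dots,i_p$ with $\abs{\alpha_{i_{j+1}}}\geq 2\abs{\alpha_{i_j}}$, and let $\beta\in\complex^r$ be any subset-sum representation of $\alpha$, with subsets $R_{i_j}\subseteq[r]$ so that $\sum_{k\in R_{i_j}}\beta_k=\alpha_{i_j}$. The quantities $\alpha_{i_j}$ are complex, not real and positive, so Moulton's original counting argument does not apply verbatim; the key obstruction is to get back to a regime where one can compare magnitudes cleanly.

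The main idea for overcoming this is a pigeonhole/averaging step on the complex phases. Write $\alpha_{i_j}=\abs{\alpha_{i_j}} e^{\mathrm{i}\theta_j}$. Since the $\theta_j$ lie in $[0,2\pi)$, some arc of length, say, $\pi/2$ contains at least a constant fraction $p'\geq p/C$ of them (in fact one can be more careful); restricting to those indices, the real parts $\mathrm{Re}(e^{-\mathrm{i}\theta_{j_0}}\alpha_{i_j})$ are all positive and still grow geometrically in $j$ (up to adjusting the base of the exponential and discarding a bounded number of terms). Applying the projection $x\mapsto \mathrm{Re}(e^{-\mathrm{i}\theta_{j_0}} x)$ to the identities $\sum_{k\in R_{i_j}}\beta_k=\alpha_{i_j}$ produces a subset-sum representation over the reals of a positive, exponentially increasing sequence, with the same length $r$. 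Then I would invoke (a real-positive version of) Moulton's theorem, or re-run Moulton's argument directly: the crux of his proof is that distinct subsets $R,R'\subseteq[r]$ give sums differing by at most $\sum_{k}\abs{\beta_k}$ in absolute value, while an exponentially increasing sequence of length $p$ forces $2^{p}$-ish many ``scales'', and a counting argument on which subsets can realize which scales yields $r\gtrsim p/\log_2 p$.

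Concretely, the cleanest route may be to avoid the phase reduction and instead re-derive Moulton's bound working directly with $\abs{\,\cdot\,}$: order the $\beta_k$ so that $\abs{\beta_1}\geq\abs{\beta_2}\geq\cdots$, and for each threshold $t$ consider how many of the target values $\alpha_{i_j}$ can be represented using only $\{\beta_k:\abs{\beta_k}\geq t\}$ versus needing smaller summands; the exponential growth hypothesis \eqref{eq:expintro} is exactly what makes the triangle-inequality bookkeeping go through, since $\abs{\alpha_{i_{j+1}}-\alpha_{i_j}}\geq \abs{\alpha_{i_{j+1}}}-\abs{\alpha_{i_j}}\geq \abs{\alpha_{i_j}}$, so consecutive targets are genuinely far apart relative to their own size. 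I expect the hardest part to be the careful combinatorial counting that converts ``$p$ well-separated target magnitudes'' into ``$r\geq p/\log_2 p$ summands'': one partitions $[r]$ into $\approx \log_2 p$ blocks by dyadic magnitude scale, argues that each block contributes at most a bounded multiplicative factor to the number of distinguishable subset-sums at its scale, and multiplies to get $2^{O(r/\log_2 p\cdot\log_2 p)}$... I would need to track the constants so that the final exponent beats $p$. Since this is precisely the content of Moulton's Theorem~1 in the geometric case, the safest presentation is to isolate the phase-restriction lemma as the only new ingredient and black-box the rest of Moulton's argument, checking only that it never uses positivity beyond what the projection step already supplies.
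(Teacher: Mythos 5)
Your diagnosis of the difficulty is slightly off, and as a result both of your proposed routes either lose the stated constant or leave the crucial step unverified. The paper's proof shows that no phase restriction is needed at all: the only property of the targets that Moulton's counting argument uses is that all $2^p$ subset sums $\sum_j u_j \alpha_{i_j}$ with $u_j \in \{0,1\}$ are pairwise distinct, and this holds for \emph{complex} numbers satisfying $\abs{\alpha_{i_{j+1}}} \geq 2\abs{\alpha_{i_j}}$ directly by the triangle inequality and induction: $\abs{\alpha_{i_1}+\dots+\alpha_{i_j}} \leq \abs{\alpha_{i_1}+\dots+\alpha_{i_{j-1}}}+\abs{\alpha_{i_j}} < 2\abs{\alpha_{i_j}} \leq \abs{\alpha_{i_{j+1}}}$. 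Once the targets are dissociated in this sense, the $\{0,1\}^r$ indicator vectors $c_j$ of the representing subsets have pairwise distinct $\{0,1\}$-combinations, and the count $2^p - 1 \leq p^r - 1$ gives $r \geq p/\log_2 p$ exactly as in Moulton. Note that your key inequality $\abs{\alpha_{i_{j+1}}-\alpha_{i_j}} \geq \abs{\alpha_{i_j}}$ only separates \emph{consecutive} targets; what the argument actually needs is separation of each target from the sum of \emph{all} smaller ones, which is the inductive statement above and which you never establish.

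Concretely, your phase-pigeonhole route retains only $p' \geq p/C$ of the targets, and the projection onto a fixed direction can shrink consecutive ratios below $2$ (by a factor up to $\cos(\pi/4)$), forcing you to discard further terms; so it proves at best $r \gtrsim p/(C \log_2 p)$ for some constant $C>1$, not the stated bound. Moreover the projected sequence is not literally $(1,2,\dots,2^{p'-1})$, so you cannot black-box Moulton's Theorem~1 and must prove the real-positive generalization anyway --- at which point the whole projection is superfluous. Your second route (dyadic blocks on the magnitudes of the $\beta_k$) is not Moulton's counting scheme, and you explicitly leave the constant-tracking open; as written it is a sketch, not a proof. The fix is to drop both detours and observe that Moulton's argument never uses positivity, only distinctness of the target subset sums, which the triangle inequality already supplies over $\complex$.
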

\begin{proof}
It suffices to consider the case $p=q$ and $2\abs{\alpha_i}\leq\abs{\alpha_{i+1}}$ for all $i \in [q-1]$. Let $\beta \in \complex^r$ be a subset-sum representation of $\alpha$. Then for each $i \in [q]$, there exists $c_i \in \{0,1\}^r$ such that $\alpha_i=\beta^{\Trans} c_i$ Suppose that, for some $u_1,\dots, u_q, v_1,\dots, v_q \in \{0,1\}$, we have
\begin{align*}
\sum_{i=1}^q u_i c_i=\sum_{i=1}^q v_i c_i.
\end{align*}
Applying $\beta^{\Trans}$ to both sides gives
\begin{align*}
\sum_{i=1}^q u_i \alpha_i=\sum_{i=1}^q v_i \alpha_i.
\end{align*}
It follows that $u_i=v_i$ for all $i\in [q]$. Indeed, it suffices to prove that $\abs{\alpha_{i+1}} > \abs{\alpha_1+\dots + \alpha_i}$ for all $i \in [q-1]$, which in turn can be easily verified by an inductive argument. By assumption, $\abs{\alpha_{2}}>\abs{\alpha_1}$, and by induction,
\begin{align*}
\abs{\alpha_1+\dots+\alpha_i} \leq \abs{\alpha_1+\dots + \alpha_{i-1}}+\abs{\alpha_i} < 2\abs{\alpha_i} \leq \abs{\alpha_{i+1}}.
\end{align*}

The remainder of the proof is identical to that of \cite{MOULTON2001193}. There are at most ${2^q-1}$ choices of $u_1,\dots, u_q \in \{0,1\}$, excluding the case $u_1=\dots=u_q=1$. For each of these choices, the sum $\sum_{i=1}^q u_i c_i$ can take one of $q^r-1$ possible choices in $\{0,1,\dots, q-1\}^{\times r}$ (the choice $(q-1,q-1,\dots, q-1)^\Trans$ is excluded since the $u_i$ are not all equal to 1). Since each choice of $u_1,\dots, u_q$ yields a different vector, we must have $ q^r-1 \geq 2^q-1$, i.e. ${r \geq q/\log_2(q)}$.
\end{proof}

\subsection{Lower bounds on stabilizer rank}\label{sec:exact}

In this subsection we use Theorem~\ref{moulton} to prove lower bounds on stabilizer rank.

\begin{theorem}\label{thm:general_linear_lower_bound}
Let $p \geq 2$ be an integer, and let $\psi \in \Spn$ be a quantum state. If the coordinates of $\psi$ contain an exponentially increasing subsequence of length $p$, then $\chi(\psi) \geq p/ (4\log_2 p)$.
\end{theorem}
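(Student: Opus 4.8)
The plan is to reduce the statement directly to our refinement of Moulton's theorem (Theorem~\ref{moulton}), exploiting the fact that every stabilizer state has computational-basis coordinates in $\{0,\pm 1,\pm \mathrm{i}\}$, as visible from the normal form~\eqref{eq:normalformofstabstate}. The only real content is packaging an $r$-term stabilizer decomposition into a subset-sum representation of length $4r$, the factor $4$ accounting for the four possible values a single stabilizer amplitude can take.

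First I would fix a representative vector $\psi$ of the state $\ppsi$. Since the exponentially increasing condition depends only on ratios of moduli, the hypothesis supplies indices $x_1,\dots,x_p \in \Fn$ for which $\alpha:=(\psi_{x_1},\dots,\psi_{x_p})\in\complex^p$ has non-zero entries and satisfies $\abs{\alpha_{j+1}}\geq 2\abs{\alpha_j}$ for all $j\in[p-1]$. Setting $r=\chi(\ppsi)$, I would write $\psi=\sum_{i=1}^r c_i\sigma_i$ where each $[\sigma_i]\in\stabn$ and $\sigma_i$ is its normal-form vector, so that $(\sigma_i)_x\in\{0,\pm 1,\pm\mathrm{i}\}$ for every $x$; after discarding terms with $c_i=0$ we may assume every $c_i$ is non-zero.

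Next I would construct the subset-sum representation. Let $\beta\in\complex^{4r}$ be the concatenation, over $i\in[r]$, of the four-tuples $(c_i,-c_i,\mathrm{i}c_i,-\mathrm{i}c_i)$; all $4r$ of its entries are non-zero. For each coordinate $x$ and each $i$, the value $c_i(\sigma_i)_x$ is either $0$ or exactly one of the four entries in the $i$-th block of $\beta$; selecting from each block the matching singleton (or $\emptyset$ when $(\sigma_i)_x=0$) and taking the union over $i$ yields a set $R_x\subseteq[4r]$ with $\sum_{j\in R_x}\beta_j=\sum_{i=1}^r c_i(\sigma_i)_x=\psi_x$. Hence $\beta$ is a subset-sum representation of the full coordinate tuple of $\psi$, and in particular of $\alpha$ (restrict to $x\in\{x_1,\dots,x_p\}$).

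Finally I would apply Theorem~\ref{moulton} to the $p$-tuple $\alpha$, which is trivially an exponentially increasing subsequence of length $p$ of itself (so the theorem's ``$q$'' equals $p\geq 2$): the subset-sum representation $\beta$ must have length $4r\geq p/\log_2 p$, whence $\chi(\ppsi)=r\geq p/(4\log_2 p)$. I do not anticipate a genuine obstacle; the argument is bookkeeping. The two points needing care are (i) arranging that the exponentially increasing subsequence has non-zero entries, so that the hypotheses of Theorem~\ref{moulton} are met verbatim, and (ii) the constant $4$, which is exactly $\abs{\{\pm c_i,\pm\mathrm{i}c_i\}}$ and is what produces the $4$ in the denominator (it could be improved to $2$ for real states decomposed into real stabilizer states, but not in general).
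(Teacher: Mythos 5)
Your proposal is correct and follows essentially the same route as the paper's proof: both reduce to Theorem~\ref{moulton} by using the $\{0,\pm 1,\pm i\}$ normal form of stabilizer states to turn an $r$-term decomposition into the length-$4r$ subset-sum representation $(c,-c,ic,-ic)$ of the exponentially increasing coordinate tuple. Your block-and-selection bookkeeping is just a more explicit rendering of the paper's matrix splitting $T=T_1-T_2+i(T_3-T_4)$, and your care about non-zero entries is a harmless (indeed slightly more careful) addition.
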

\begin{proof}
Let $r=\chi(\psi)$, let $x_1,\dots, x_p \in \Fn$ be such that $\abs{\psi_{x_i}}\leq 2 \abs{\psi_{x_{i+1}}}$ for all ${i \in [p-1]}$, and let $\alpha=(\psi_{x_1},\dots, \psi_{x_p})\in \complex^p$. Without loss of generality, there exist complex numbers $\{c_i : i \in [r]\}\subseteq \complex$ and stabilizer states $\{\sigma_i : i \in [r]\}\subseteq \stabn$ such that for all $i \in [r]$, every coordinate of $\sigma_i$ is an element of $\{0,\pm1, \pm i\},$ and $\psi=\sum_{i=1}^r c_i \sigma_i$. Let
\begin{align}
S=(\sigma_1,\dots, \sigma_r) \in \{0,\pm 1, \pm i\}^{\{0,1\}^n \times r}
\end{align}
and
\begin{align}
{c=(c_1,\dots, c_r) \in \complex^{r}},
\end{align}
so that $S c=\psi$. In particular, there exists a $p \times r$ submatrix $T$ of $S$ for which $Tc=\alpha$.
Let $T_1, T_2, T_3, T_4 \in \{0,1\}^{p \times r}$ be such that 
$${T=T_1-T_2+i(T_3-T_4)}.$$
Then
\begin{align}
(T_1,T_2,T_3,T_4)(c,-c,ic,-ic)^\Trans=Tc=\alpha,
\end{align}
so $(c,-c,ic,-ic)$ is a subset-sum representation of $\alpha$. It follows from Theorem~\ref{moulton} that $4r \geq p/(\log_2 p)$. This completes the proof.
\end{proof}

Theorem~\ref{thm:general_linear_lower_bound} also implies the following lower bound on $\chi(T^{\otimes n})$, and more generally, on $\chi(\psin)$ for any non-stabilizer qubit state $\psi$.
\begin{corollary}\label{linear_lower_bound}
For any state $\psi \in \Sp(\complex^2)$ that is not a stabilizer state, $\chi(\psin) = \Omega(n/\log_2 n)$. In particular,
\begin{align}\label{eq:T_lower_bound}
\chi(T^{\otimes n}) \geq \frac{n+1}{4 \log_2 (n+1)}.
\end{align}
\end{corollary}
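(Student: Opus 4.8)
The plan is to reduce both claims to Theorem~\ref{thm:general_linear_lower_bound} by exhibiting, for a suitable representative of each state, a long exponentially increasing subsequence among its computational-basis coordinates. For the general statement, let $\ppsi\in\proj^1$ be a non-stabilizer qubit state. First I would use Clifford-equivalence to bring $\ppsi$ into a convenient normal form. Since $\ppsi$ is not a stabilizer state, by the classification~\eqref{eq:normalformofstabstate} it is not fixed (up to phase) by the single-qubit Clifford group, and one can check there is a Clifford unitary $U\in\Cliff_1$ so that $U\ppsi=[\e_0+\lambda\,\e_1]$ with $\abs{\lambda}\neq 0$ and (after possibly swapping $\e_0\leftrightarrow\e_1$ via $H$-conjugation or applying $S$) with $\abs{\lambda}$ bounded away from $1$ — indeed I expect $\abs{\lambda}\geq\sqrt2-1$ or its reciprocal can be arranged, as in the $[H]$ example. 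The key point is only that $\abs{\lambda}\neq1$; WLOG $\abs{\lambda}>1$. Then $U^{\otimes n}\ppsin=[(\e_0+\lambda\e_1)^{\otimes n}]$, whose coordinate indexed by $x\in\Fn$ equals $\lambda^{\abs{x}}$, where $\abs{x}$ is the Hamming weight. Choosing the $n+1$ indices $x^{(k)}$ of Hamming weights $0,1,\dots,n$ gives coordinates $1,\lambda,\lambda^2,\dots,\lambda^n$ with $\abs{\lambda^{k+1}/\lambda^{k}}=\abs{\lambda}>1$. This is an exponentially increasing subsequence of length $p=\Theta(n/\log\abs{\lambda})$: more precisely, taking every $\lceil 1/\log_2\abs{\lambda}\rceil$-th weight yields a subsequence of length $p=\Omega(n)$ satisfying~\eqref{eq:expintro}. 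Since stabilizer rank is Clifford-invariant, Theorem~\ref{thm:general_linear_lower_bound} then gives $\chi(\ppsin)=\chi(U^{\otimes n}\ppsin)\geq p/(4\log_2 p)=\Omega(n/\log_2 n)$.

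For the sharper constant in~\eqref{eq:T_lower_bound}, I would use the specific equivalence asserted in the excerpt's introduction: $[T]$ is Clifford-equivalent to $[H]=[\e_0+\tfrac{1}{\sqrt2-1}\e_1]$. Writing $\mu=\tfrac{1}{\sqrt2-1}=\sqrt2+1$, the coordinates of $[H^{\otimes n}]$ are $\mu^{\abs{x}}$ for $x\in\Fn$, so taking the $n+1$ weights $0,1,\dots,n$ gives the sequence $1,\mu,\mu^2,\dots,\mu^n$; since $\mu=\sqrt2+1>2$, we have $\abs{\mu^{k+1}}\geq 2\abs{\mu^{k}}$ for every $k$, so this is literally an exponentially increasing subsequence of length $p=n+1$ (no thinning needed). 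Theorem~\ref{thm:general_linear_lower_bound} then yields $\chi([T^{\otimes n}])=\chi([H^{\otimes n}])\geq (n+1)/(4\log_2(n+1))$, which is exactly~\eqref{eq:T_lower_bound}.

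The only genuinely non-routine step is the normal-form reduction in the general case: one must verify that every non-stabilizer single-qubit state is Clifford-equivalent to $[\e_0+\lambda\e_1]$ with $\abs{\lambda}$ bounded away from both $0$ and $1$ (so that the ratio $\abs{\lambda}$ — or, after the thinning by a factor $\lceil 1/\log_2\abs{\lambda}\rceil$, the effective ratio — is at least $2$ while the subsequence length stays $\Omega(n)$). A state $\ppsi=[\e_0+\lambda\e_1]$ is a stabilizer state precisely when $\lambda\in\{0,\pm1,\pm i\}$ or $\ppsi=[\e_1]$; ruling these out, Clifford operations ($H$, $S$, Paulis) act on $\lambda$ by $\lambda\mapsto i\lambda$, $\lambda\mapsto 1/\lambda$, $\lambda\mapsto(1-\lambda)/(1+\lambda)$ and sign changes, and a short case analysis shows the orbit of any non-stabilizer $\lambda$ contains a representative with $\abs{\lambda}\notin\{0,1\}$; then pass to $1/\lambda$ if needed so $\abs{\lambda}>1$. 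Everything after that — the tensor-power coordinate formula, the Hamming-weight thinning, and the invocation of Theorem~\ref{thm:general_linear_lower_bound} — is mechanical. The $T$-state case sidesteps this obstacle entirely by quoting the explicit equivalence $[T]\sim[H]$ and using that $\sqrt2+1>2$.
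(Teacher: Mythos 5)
Your proposal is correct and follows essentially the same route as the paper: reduce via Clifford equivalence to a representative $[\e_0+\alpha\e_1]$ with $\abs{\alpha}>1$ (handling $\abs{\alpha}=1$ by applying $H$ or $XH$), pass to a power $\alpha^k$ with $\abs{\alpha}^k\geq 2$ to extract an exponentially increasing subsequence of length $\floor{n/k}+1$ from the Hamming-weight coordinates, and invoke Theorem~\ref{thm:general_linear_lower_bound}; the $T$-state case likewise uses $\sqrt2+1>2$ with $k=1$ to get length $n+1$. No gaps.
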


\begin{proof}
Since $\psi$ is not a stabilizer state, there exists $\alpha \in \complex$ with $\abs{\alpha}>1$ for which ${\tau:= \frac{1}{\sqrt{1+\abs{\alpha}^2}} (\mathrm{e}_0+\alpha \mathrm{e}_1})$ is in the $\aCliff_1$-orbit of $\psi$. Indeed, by adjusting the global phase of $\psi$ we may assume $\psi = \frac{1}{\sqrt{1+\abs{\beta}^2}} (\e_0 + \beta \e_1)$ for some complex number $\beta \in \complex^{\times}\setminus \{0,\pm 1, \pm i\}$, because $\psi$ is not a stabilizer state. If $\abs{\beta}>1$ we are done, and if $\abs{\beta}<1$ then let $\tau = X \psi$. If $\abs{\beta}=1$ then either $H \psi$ or $XH \psi$ must have the desired form. When $\psi=T$, we can take $\tau = XH\psi \propto \mathrm{e}_0+\frac{i}{\sqrt{2}-1}\mathrm{e}_1$.

Since $\abs{\alpha} >1$, there exists $k \in \naturals$ for which $\abs{\alpha}^k \geq 2$. (When $\psi=T$, we can take ${k=1}$.) Now observe that the complex numbers $1,\alpha^k, \alpha^{2k},\dots, \alpha^{\floor{n/k} k}$ all appear as coordinates of $\psi^{\otimes n}$. By Theorem~\ref{thm:general_linear_lower_bound}, it follows that
\begin{align}\label{bound}
\chi(\psi^{\otimes n})\geq \frac{\floor{n/k}+1}{4\log_2 (\floor{n/k}+1)}.
\end{align}
This completes the proof.
\end{proof}

\subsection{Lower bounds on approximate stabilizer rank}\label{sec:approx}

In this subsection, we use Theorem~\ref{moulton} to prove that, for any non-stabilizer qubit state $\psi$, there exists a constant $\delta>0$ for which $\chi_\delta (\psin)\geq \sqrt{n}/(2\log_2n).$

\begin{theorem}\label{thm:approximate stabilizer rank}
For any non-stabilizer qubit state $\psi \in \Sp(\complex^2)$, there exists a constant $\delta > 0$ such that, for every integer $n \geq 2$,
\begin{equation*}
\chi_\delta ( \psi^{\otimes n}) \geq \frac{{\sqrt{n}}}{2 \log_2 n}.
\end{equation*}

\end{theorem}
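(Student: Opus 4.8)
The plan is to reduce the approximate statement to the exact lower bound of Theorem~\ref{thm:general_linear_lower_bound} applied to a carefully chosen \emph{sub-tensor} of $\psi^{\otimes n}$, and to use concentration of the binomial distribution to certify that this sub-tensor still has an exponentially increasing subsequence of the right length even after a small perturbation. As in the proof of Corollary~\ref{linear_lower_bound}, I would first apply a Clifford to pass to $\ptau = [\e_0 + \alpha \e_1]$ with $\abs{\alpha} > 1$ (note $\chi_\delta$ is Clifford-invariant up to the substitution of $\delta$, since Cliffords are unitary), and set $\lambda = \log_2 \abs{\alpha} > 0$. The coordinates of $\tau^{\otimes n}$ are $\alpha^{\abs{x}}$ for $x \in \Fn$, with multiplicity $\binom{n}{w}$ at Hamming weight $w$; in absolute value these are $2^{\lambda w}$. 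So consecutive weights $w, w+k$ give ratio $\geq 2$ once $\lambda k \geq 1$.

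The key point is to choose a collection of weights $w_0 < w_1 < \dots < w_{p-1}$, spaced by some step $k = O(1)$, all lying in the ``bulk'' near $n/2$ where the binomial coefficients are large — say each $\binom{n}{w_j} \geq 2^{n} / \mathrm{poly}(n)$, which holds for all $w_j$ within $O(\sqrt{n})$ of $n/2$. This gives room for $p = \Theta(\sqrt{n})$ such weights. Now suppose $[\phi]$ is $\delta$-close to $[\tau^{\otimes n}]$ with $\chi([\phi]) = r$. I would argue that for each chosen weight $w_j$, \emph{some} coordinate $\phi_x$ with $\abs{x} = w_j$ has $\abs{\phi_x}$ comparable to the typical coordinate $2^{\lambda w_j}$ of $\tau^{\otimes n}$ at that weight: indeed, the total $\ell^2$-mass of $\tau^{\otimes n}$ on weight $w_j$ is $\binom{n}{w_j} \abs{\alpha}^{2 w_j}$, a $\delta$-perturbation can remove at most an $O(\delta)$ fraction of the total mass, and if $\delta$ is chosen small enough (depending only on $\lambda$, via the geometric decay of the weight-profile away from its peak), then $\phi$ must retain a coordinate at weight $w_j$ of magnitude $\gtrsim 2^{\lambda w_j}$. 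Picking one such coordinate per weight yields, after the perturbation, a length-$p$ exponentially increasing subsequence of the coordinates of $\phi$ (perhaps rescaling the step $k$ by a constant to absorb the lost constant factors in the ``comparable'' estimates). Theorem~\ref{thm:general_linear_lower_bound} then gives $r \geq p/(4\log_2 p) \geq \sqrt{n}/(2\log_2 n)$ after fixing the constants in $p = \Theta(\sqrt n)$ and checking the inequality for all $n \geq 2$.

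The main obstacle I anticipate is making the mass-concentration argument precise enough to guarantee a \emph{single surviving large coordinate at each of the chosen weights simultaneously}. One has to control, for the unperturbed state, how much of its squared norm sits at weights below $n/2 - c\sqrt n$ versus in the bulk (a binomial tail estimate, e.g. Hoeffding, handles this, but the $\abs{\alpha}^{2w}$ reweighting shifts the effective distribution and must be tracked), and then argue that a perturbation of fixed size $\delta$ cannot simultaneously deplete all $p$ of the targeted weight-slices below the threshold needed for the exponential-subsequence conclusion. Choosing $\delta$ as a function of $\lambda = \log_2\abs\alpha$ alone — uniformly in $n$ — is the delicate part; I expect the right move is to pick the weights $w_j$ so that their individual $\ell^2$-masses $\binom{n}{w_j}\abs\alpha^{2w_j}$ are each at least a fixed constant fraction of the \emph{total} norm $\norm{\tau^{\otimes n}}^2 = (1+\abs\alpha^2)^n$, which pins $w_j$ near the true peak $w^\ast = n\abs\alpha^2/(1+\abs\alpha^2)$ of the reweighted profile rather than near $n/2$, and then a standard local central limit / Gaussian approximation of the reweighted binomial shows $\Theta(\sqrt n)$ such weights exist with a spacing $k = O(1/\lambda)$.
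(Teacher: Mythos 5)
Your overall strategy is the same as the paper's: reduce to $[\tau]=[\e_0+\alpha\e_1]$ with $\abs{\alpha}>1$, work near the peak $\abs{\gamma}^2 n = n\abs{\alpha}^2/(1+\abs{\alpha}^2)$ of the reweighted binomial profile, show that any $\delta$-close state $\phi$ must retain, at many Hamming weights, a coordinate of magnitude comparable to $\abs{\beta}^{n-p}\abs{\gamma}^p$, and feed the resulting exponentially increasing subsequence of coordinates of $\phi$ into Theorem~\ref{thm:general_linear_lower_bound}. The gap is in precisely the step you flag as delicate, and your proposed fix does not work. By the local central limit theorem, each weight slice near the peak carries squared mass $\Theta(1/\sqrt n)$ of $\norm{\tau^{\otimes n}}^2$, so you cannot choose $\Theta(\sqrt n)$ weights whose individual $\ell^2$-masses are ``each at least a fixed constant fraction of the total norm'' --- those fractions would sum to more than $1$. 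And since each slice carries only $\Theta(1/\sqrt n)$ of the mass, a perturbation of constant norm $\delta$ \emph{can} completely wipe out any individual slice; demanding that every chosen weight retain a large coordinate would force $\delta$ to decay with $n$ (roughly $\delta=O(n^{-1/4})$), which defeats the statement.

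The missing idea is a counting argument over a window of $\Theta(\sqrt n)$ weights rather than a per-weight guarantee. Call a weight $p$ in the window $\abs{\gamma}^2 n \pm k\lceil\sqrt n\rceil$ \emph{bad} if \emph{every} coordinate $x$ with $\abs{x}=p$ satisfies $\abs{\tau^{\otimes n}_x-\phi_x}\geq c'\abs{\beta}^{n-p}\abs{\gamma}^p$ for a suitable constant $c'$; each bad weight then contributes at least $c/\sqrt n$ to $\norm{\tau^{\otimes n}-\phi}^2$ by DeMoivre--Laplace, so taking $\delta=\sqrt{ck}$ caps the number of bad weights at $k\sqrt n$ and leaves at least $k\lceil\sqrt n\rceil$ good ones. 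The good weights need not be evenly spaced, so one must further extract a subset of size $\sqrt n$ with pairwise spacing at least $k$; for two surviving coordinates at good weights $p<q$ with $q-p\geq k$, the triangle inequality gives $\abs{\phi_{x_q}}/\abs{\phi_{x_p}}\geq \lambda\abs{\alpha}^{q-p}\geq 2$ with $\lambda=2/\abs{\alpha}^k$, which is the exponentially increasing subsequence you need. This pigeonhole-plus-spacing step is exactly how the paper obtains a constant $\delta$; without it (or an equivalent replacement) your argument does not close.
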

\begin{proof}

As in the proof of Corollary~\ref{linear_lower_bound}, since $\psi$ is not a stabilizer state, there exists $\alpha \in \complex$ with $\abs{\alpha}>1$ for which $\tau:=\frac{1}{\sqrt{1+\abs{\alpha}^2}}(\mathrm{e}_0+\alpha \mathrm{e}_1) \in \aCliff_1 (\psi)$.  Let $\beta=\frac{1}{\sqrt{1+\abs{\alpha}^2}} $ and $\gamma=\frac{\alpha}{\sqrt{1+\abs{\alpha}^2}},$ so ${\tau=\beta \mathrm{e}_0+\gamma \mathrm{e}_1}$. Since the (approximate) stabilizer rank is unchanged under $\aCliff_n$, it suffices to lower bound the approximate stabilizer rank of $\tau^{\otimes n}$.

Let $k \in \naturals$ be the smallest integer for which $\abs{\alpha}^k > 2$, and let $\lambda=\frac{2}{\abs{\alpha}^k}$. If $\psi=T$, then we can take $\alpha=\frac{1}{\sqrt{2}-1}$ and $k=1$. Let $\phi \in \Sp(\V_n)$ be a state, and let
$S\subseteq [n]$ be the set of integers $p \in [n]$ \replacement{that satisfy the following two properties:}
\begin{enumerate}
\item $\abs{\gamma}^2 n- k \ceil{\sqrt{n}}\leq  p \leq \abs{\gamma}^2 n + k \ceil{\sqrt{n}}.$
\item For all $x \in \Fn$ of Hamming weight $\abs{x}=p$,
\begin{align}\label{eq:S_inequality}
\abs{\tau_x^{\otimes n} - \phi_x} \geq \left(\frac{1-\lambda}{1+\lambda}\right) \abs{\beta}^{n-p} \abs{\gamma}^{p}.
\end{align}
\end{enumerate}
\replacement{By the De Moivre-Laplace Theorem, there exists a constant $\tilde{c} >0$ (which may depend on $\abs{\alpha}$, but does not depend on $n$) for which}
\begin{align}
 \binom{n}{p} \abs{{\beta}^{n-p} {\gamma}^{p}}^2 \geq \tilde{c}/\sqrt{n}
\end{align}
\replacement{for all $p \in [n]$ that satisfy property 1 (see \cite[Section VII, Theorem 1]{feller} or \cite[Claim 4.6]{peleg2021lower}). Let}
\begin{align}
c=\left(\frac{1-\lambda}{1+\lambda}\right)^2 \tilde{c},
\end{align}
\replacement{so that}
\begin{align}
\left(\frac{1-\lambda}{1+\lambda}\right)^2 \binom{n}{p} \abs{{\beta}^{n-p} {\gamma}^{p}}^2 \geq c/\sqrt{n}
\end{align}
for all $p \in S$. It follows that
\begin{align}
\bignorm{\tau^{\otimes n} -\phi}^2 \geq \sum_{p \in S} \left(\frac{1-\lambda}{1+\lambda}\right)^2 \binom{n}{p} \abs{\beta^{n-p} \gamma^{p}}^2\geq \abs{S} \frac{c}{\sqrt{n}}.
\end{align}
Let $\delta=\sqrt{ck}$, and suppose that $\bignorm{\tau^{\otimes n} -\phi}\leq \delta$. Then $\abs{S} \leq k \sqrt{n}$. Let $P\subseteq [n]$ be the set of integers $p\in [n]$ that satisfy property 1 but do not satisfy property 2. Observe that $\abs{P}\geq k \ceil{\sqrt{n}}$, \replacement{because there are at least $2k \ceil{\sqrt{n}}$ integers that satisfy property 1, and $\abs{S}\leq k \sqrt{n}$.} By our definition of $P$, for each $p \in P$ there exists $x_p \in \Fn$ with $\abs{x_p}=p$ and
\begin{align}\label{eq:Pbound}
\bigabs{\tau^{\otimes n}_{x_p} - \phi_{x_p}} \leq \left(\frac{1-\lambda}{1+\lambda}\right) \abs{\beta}^{n-p} \abs{\gamma}^{p}.
\end{align}
For any $p,q \in P$ with $p<q$, it holds that
\begin{align}
\frac{\abs{\phi_{x_{q}}}}{\bigabs{\phi_{{x_{p}}}}}&=\frac{\bigabs{\phi_{x_q}-\tau^{\otimes n}_{x_q}+\taun_{x_q}}}{\bigabs{\phi_{x_p}-\tau^{\otimes n}_{x_p}+\taun_{x_p}}}\\
&\geq \frac{\bigabs{\taun_{x_q}}-\bigabs{\phi_{x_q}-\tau^{\otimes n}_{x_q}}}{\bigabs{\taun_{x_p}}+\bigabs{\phi_{x_p}-\tau^{\otimes n}_{x_p}}}\\
&\geq \frac{(1-\frac{1-\lambda}{1+\lambda}) \abs{\beta^{n-q} \gamma^q}}{(1+\frac{1-\lambda}{1+\lambda}) \abs{\beta^{n-p} \gamma^p}}\\
&=\lambda \abs{\alpha}^{q-p},
\end{align}
where the first line is trivial, the second is the triangle inequality, \replacement{the third follows from~\eqref{eq:Pbound} and the fact that $\bigabs{\taun_{x}}=\abs{\beta^{n-\abs{x}}\gamma^{\abs{x}}}$ for all $x \in \field_2^n$}, and the fifth is obvious. In particular, if $q-p \geq k$, then $\frac{\abs{\phi_{x_{q}}}}{\abs{\phi_{{x_{p}}}}}\geq 2$. Since $\abs{P}\geq k \ceil{\sqrt{n}}$, there exists a subset $Q\subseteq P$ of size $\abs{Q} \geq \sqrt{n}$ for which $p-q \geq k$ for all $p,q \in Q$ with $p<q$.
By Theorem~\ref{thm:general_linear_lower_bound},
\begin{align}
\chi(\phi) \geq \frac{\sqrt{n}}{2 \log_2(n)}.
\end{align}
This completes the proof.
\end{proof}

We note that, using~\cite[Section VII, Theorem 1]{feller} and a similar proof as above, slight improvements to the above bound can be obtained if $\delta$ is allowed to decay in $n$.

\subsection{\replacement{Product states with exponential stabilizer rank and constant approximate stabilizer rank}}\label{sec:exp}

It is not difficult to prove that for any $\delta>0$, there exist product states $\psi_n \in \Spn$ with $\chi(\psi_n)=2^n$ (the largest possible) and ${\chi_\delta(\psi_n)=1}$ (the smallest possible). Indeed, generic product states have stabilizer rank $2^n$ (see the discussion following Proposition~\ref{method}), so a $\delta$-ball around any product stabilizer state (e.g. $\e_0^{\otimes n}$) will contain (many) product states of stabilizer rank $2^n$. In this subsection, we will explicitly construct examples of such product states. In Corollary~\ref{lemma:constant_approx} we use Theorem~\ref{thm:general_linear_lower_bound} to provide a simple proof that a particular sequence of product states has stabilizer rank at least $\frac{2^n}{4n}$ and $\delta$-approximate stabilizer rank $\mathcal{O}(1)$. In Proposition~\ref{prop:stabilizerrankmaximal}, we use independent, field-theoretic techniques to construct a different sequence of product states of stabilizer rank $2^n$ and $\delta$-approximate stabilizer rank 1. The following lemma will allow us to upper-bound the $\delta$-approximate stabilizer rank of product states with exponentially increasing coordinates.

\begin{lemma}\label{lemma:constant_approx}
Let $\delta>0$ be a positive real number. For a complex number $\theta \in \complex$ and natural number $n \in \naturals$, let
\begin{equation*}
    \psi^\theta_n =\sqrt{\frac{\abs{\theta}^2 - 1}{\abs{\theta}^{2^{n+1}}-1}} \bigotimes_{i=1}^n (\e_0+\theta^{2^{i-1}} \e_1)\in \Spn.
\end{equation*}
If $\abs{\theta}>1$, then $\chi_{\delta}(\psi_n^\theta) = \mathcal{O}(1)$. Furthermore, there exists a positive real number $\lambda_\delta >0$ such that for any $\theta \in \complex$ with $\abs{\theta} > \lambda_\delta$, it holds that $\chi_\delta (\psi_n^\theta) = 1$ for all $n \in \mathbb{N}$.
\end{lemma}
\begin{proof}
For each $i \in \{0,1,\dots, 2^n-1\}$, let
\begin{align}
c_i=\sum_{j=0}^i \abs{\theta}^{2j}=\frac{{\abs{\theta}^{2i+2}-1}}{\abs{\theta}^2 - 1},
\end{align}
and observe that for any positive integer $k$, the tensor $\phi_{n,k}^\theta \in \V_n$ obtained by setting all but the $k$ largest coordinates of $\psi_n^\theta$ to zero satisfies

\begin{align}
\biggnorm{{\psi_n^\theta} - \frac{\phi_{n,k}^\theta}{\norm{\phi_{n,k}^\theta}}}^2
&=\biggnorm{\frac{1}{\sqrt{c_{2^n - 1}}}\sum_{i=0}^{2^n -1} \theta^i \e_i-\frac{1}{\sqrt{c_{2^n - 1}-c_{2^n-k -1}}}\sum_{i=2^n-k}^{2^n -1} \theta^i \e_i}^2\\
&=\frac{c_{2^n-k-1}}{c_{2^n-1}}+\left(\frac{1}{\sqrt{c_{2^n-1}}}-\frac{1}{\sqrt{c_{2^n-1}-c_{2^n-k-1}}}\right)^2 (c_{2^n-1}-c_{2^n-k-1})\\
&=\frac{c_{2^n-k-1}}{c_{2^n-1}}+\left[\sqrt{1-\frac{c_{2^n-k-1}}{c_{2^n-1}}}-1\right]^2,\\
&=\frac{\abs{\theta}^{-2k}-\abs{\theta}^{-2^{n+1}}}{1-\abs{\theta}^{-2^{n+1}}}+\left[\sqrt{\frac{1-\abs{\theta}^{-2k}}{1-\abs{\theta}^{-2^{n+1}}}}-1\right]^2,\label{eq:constant}
\end{align}
where we have re-indexed the computational basis of $\V_n$ as $\e_0,\dots, \e_{2^n-1}$ for clarity in this proof. Since $\abs{\theta}>1$, the quantity~\eqref{eq:constant} can be set to less than $\delta^2$ by appropriate choice of $k=\mathcal{O}(1)$. Since $\chi(\phi_{n,k}^\theta)\leq k$, this shows that $\chi_\delta (\psi_n^\theta) = \mathcal{O}(1)$. It is clear that we can set $k=1$ if $\abs{\theta}$ is large enough. This completes the proof.
\end{proof}

Note that, using Theorem~\ref{thm:general_linear_lower_bound} and Lemma~\ref{lemma:constant_approx}, we can easily find a sequence of product states of stabilizer rank at least $\frac{2^n}{4n}$ and $\delta$-approximate stabilizer rank $\mathcal{O}(1)$.
\begin{corollary}\label{rmk:explicitlargerank}
For any $n \in \naturals$, let
\begin{align}\label{eq:exp}
\psi_n =\sqrt{\frac{3}{4^{2^n}-1}} \bigotimes_{i=1}^n (\e_0+2^{2^{i-1}} \e_1)\in \Spn
\end{align}
be a quantum state. Then $\chi(\psi_n) \geq \frac{2^n}{4n}$ and for any constant $\delta>0$, $\chi_{\delta}(\psi_n) = \mathcal{O}(1)$.
\end{corollary}
\begin{proof}
For each $n$, the coordinates of $\psi_n$ form an exponentially increasing sequence of length $2^n$. It follows from Theorem~\ref{thm:general_linear_lower_bound} that $\chi(\psi_n) \geq \frac{2^n}{4n}$. The bound $\chi_{\delta}(\psi_n) = \mathcal{O}(1)$ follows from Lemma~\ref{lemma:constant_approx}.
\end{proof}

We now use field-theoretic techniques to construct a sequence of product states of stabilizer rank $2^n$ and $\delta$-approximate stabilizer rank $1$.

\begin{proposition}\label{prop:stabilizerrankmaximal}
Let $\delta>0$ be a positive real number. If $\theta \in \mathbb{C}$ is a complex number of degree at least $2^n$ over $\mathbb{Q}(i)$, then the state 
\begin{equation*}
    \psi^\theta_n =\sqrt{\frac{\abs{\theta}^2 - 1}{\abs{\theta}^{2^{n+1}}-1}} \bigotimes_{i=1}^n (\e_0+\theta^{2^{i-1}} \e_1)\in \Spn
\end{equation*}
has stabilizer rank $\chi(\psi_n^\theta) = 2^n$. If it furthermore holds that $\abs{\theta}>1$, then $\chi_{\delta}(\psi_n^\theta) = \mathcal{O}(1)$. Finally, there exists a positive real number $\lambda_{\delta}$ such that for every $\theta \in \mathbb{C}$ of degree at least $2^n$ over $\mathbb{Q}(i)$ for which $\abs{\theta}>\lambda_{\delta}$, it holds that $\chi(\psi_n^\theta)=2^n$ and $\chi_\delta (\psi_n^\theta) = 1$.
\end{proposition}
For example, to obtain a product state with stabilizer rank $2^n$ and approximate stabilizer rank $\mathcal{O}(1)$, one can choose any transcendental number $\theta \in \complex$, e.g. the circumference-to-diameter ratio $\pi.$ Note also that there are $\theta \in \mathbb{C}$ with $|\theta|=1$ such that Proposition~\ref{prop:stabilizerrankmaximal} applies. It is worth mentioning that in this case, the coordinates of $\psi_n^\theta$ contain no exponentially increasing sequence, that is, Theorem~\ref{thm:general_linear_lower_bound} does not yield an immediate lower bound on $\chi (\psi_n^\theta )$.
\begin{proof}
Note that the coordinates of $\psi_n^\theta$ are (up to the normalization factor) $1, \theta, \dots , \theta^{2^n-1}$. By the degree assumption on $\theta$, the coordinates of $\psi_n^\theta$ are linearly independent over $\mathbb{Q}(i)$.
Let $\psi_n^\theta=\sum_{i=1}^r c_i \sigma_i$ be a stabilizer decomposition of $\psi_n^\theta$, where $\sigma_i\in\stabn$ and $c_i\in \complex$ for each $i \in [r]$. Since the coordinates of each $\sigma_i$ are contained in $\mathbb{Q}(i)$, it follows that
\begin{align}
\spn_{\mathbb{Q}(i)}\{c_1,\dots, c_r\} \supseteq \{1, \theta, \dots, \theta^{2^n-1}\},
\end{align}
so $r \geq 2^{n}$. This completes the proof that $\chi(\psi_n^\theta) = 2^n$. The remaining statements follow from Lemma~\ref{lemma:constant_approx}.
\end{proof}

\section{States with multiplicative stabilizer rank under the tensor product}\label{sec:multiplicative}
It is a standard fact that the stabilizer rank is \textit{sub-multiplicative} under the tensor product, i.e. $\chi(\psi \otimes \psi) \leq \chi(\psi)^2$ for any quantum state $\psi$~\cite[Section 2.1.3]{hammam}. In~\cite[Section 4.4]{hammam} it was remarked that there are no known examples of quantum states $\psi$ of stabilizer rank greater than one for which equality holds. In this section, we explicitly construct two-qubit states $\psi$ for which $\chi(\psi)=2$ and $\chi(\psi \otimes \psi) = 4$. Note that this is the smallest non-trivial example of multiplicative stabilizer rank that one can hope for, since for any single-qubit state $\psi\propto \mathrm{e}_0+ \alpha \mathrm{e}_1$, it holds that
$\psi^{\otimes 2}\propto \e_{00}+\alpha(\e_{01}+\e_{10}) + \alpha^2 \e_{11}$, which has stabilizer rank at most 3.\\
 
\begin{theorem}
Let
\begin{align}
\psi_\alpha=\frac{1}{\sqrt{1+2\abs{\alpha}^2}} (\e_{00}+\alpha(\e_{01}+\e_{10})) \in \Sp((\complex^2)^{\otimes 2})
\end{align}
when $\alpha \in \complex^ \times$ is a non-zero complex number. Then $\chi(\psi_\alpha)=2$, and for all but finitely many $\alpha$ it holds that $\chi(\psi_\alpha^{\otimes 2})=4$. In particular, $\chi(\psi_\alpha^{\otimes 2})=4$ for any $\alpha$ that is transcendental over $\mathbb{Q}$.
\end{theorem}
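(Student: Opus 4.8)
First I would establish the upper bound $\chi([\psi_\alpha]) \le 2$ and $\chi([\psi_\alpha^{\otimes 2}]) \le 4$. The first is immediate: $\psi_\alpha = \e_{00} + \alpha(\e_{01}+\e_{10})$, and both $[\e_{00}]$ and $[\e_{01}+\e_{10}] = CNOT\cdot H_2 [\e_{00}]$ (or a similar Clifford image) are stabilizer states, so $\psi_\alpha$ is a superposition of two of them; squaring gives $\chi([\psi_\alpha^{\otimes 2}]) \le 4$ by submultiplicativity. For the lower bound $\chi([\psi_\alpha]) = 2$, note $\psi_\alpha$ is not proportional to any stabilizer state for $\alpha \ne 0$ (its support is $\{00,01,10\}$, size $3$, which is not a power of $2$, so it cannot be of the normal form~\eqref{eq:normalformofstabstate}), hence $\chi([\psi_\alpha]) \ge 2$.

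The crux is the lower bound $\chi([\psi_\alpha^{\otimes 2}]) \ge 4$ for all but finitely many $\alpha$. The plan is to argue by contradiction: suppose $\chi([\psi_\alpha^{\otimes 2}]) \le 3$, so $\psi_\alpha^{\otimes 2} = c_1\sigma_1 + c_2\sigma_2 + c_3\sigma_3$ for stabilizer states $[\sigma_i] \in \stabn[4]$ and scalars $c_i \in \complex$. The tensor $\psi_\alpha^{\otimes 2}$ is a polynomial vector in $\alpha$ (its coordinates are $1, \alpha, \alpha^2$ up to combinatorial multiplicities, supported on the $9$ strings in $\{00,01,10\}^2 \subseteq \field_2^4$). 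I would set up the problem as an algebraic-geometric incidence condition: the set of $[\phi] \in \proj(\V_4)$ with $\chi([\phi]) \le 3$ is exactly the secant variety $\Sigma_3(\stabn[4])$, which is a closed (in fact Zariski-closed, since $\stabn[4]$ is finite) subvariety of $\proj^{15}$. The curve $\alpha \mapsto [\psi_\alpha^{\otimes 2}]$ is an algebraic curve in $\proj^{15}$; either it lies entirely inside $\Sigma_3(\stabn[4])$, or it meets it in only finitely many points. So it suffices to exhibit \emph{one} value of $\alpha$ — ideally one transcendental over $\mathbb{Q}$, or one generic algebraic value — for which $\chi([\psi_\alpha^{\otimes 2}]) = 4$; then the finite-exception statement follows, and the transcendence statement follows because the (finitely many) bad algebraic $\alpha$ are all algebraic over $\mathbb{Q}$.

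To nail down that single value, I would use a rigidity/counting argument on the support. Writing $\psi_\alpha^{\otimes 2} = \sum c_i \sigma_i$ with each $\sigma_i$ having coordinates in $\{0,\pm1,\pm i\}$ and supported on an affine subspace $A_i \subseteq \field_2^4$, the support of $\psi_\alpha^{\otimes 2}$ is the $9$-element set $T \times T$ with $T = \{00,01,10\}$, which must be contained in $A_1 \cup A_2 \cup A_3$. One checks the combinatorics: a single affine subspace of $\field_2^4$ has size a power of two ($1,2,4,8,16$), and $9$ points cannot be covered by three affine subspaces unless at least one has dimension $\ge 3$ (size $\ge 8$); the geometry of which size-$8$ affine subspaces of $\field_2^4$ contain $7$ of the $9$ points of $T\times T$ is very restrictive. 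Pushing this case analysis — combined with the constraint that the coordinate values are $1,\alpha,\alpha^2$ (three distinct values for generic $\alpha$), while stabilizer coordinates are roots of unity times a common scalar — should force a system of polynomial equations in $\alpha$ with only finitely many solutions, giving the contradiction for all remaining $\alpha$. The main obstacle I anticipate is precisely this finite case analysis over the affine subspace lattice of $\field_2^4$: there are many size-$8$ affine subspaces and many ways three of them (with overlaps) could cover $T \times T$, and one must either enumerate them or find a slicker invariant (e.g.\ restricting to a well-chosen $2$-dimensional coordinate slice where the $3$-term decomposition is forced to behave like a rank condition on a small matrix) to rule them all out cleanly. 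An alternative, possibly cleaner, route would be to pick a concrete transcendental $\alpha$ and directly verify $\chi([\psi_\alpha^{\otimes 2}]) \ge 4$ by showing $\psi_\alpha^{\otimes 2}$ is not in the span of any $3$ of the finitely many stabilizer states in $\stabn[4]$ — a finite check in principle, but one that still reduces to the same combinatorial bookkeeping to be presentable.
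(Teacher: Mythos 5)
Your overall strategy is the same as the paper's: reduce to showing that, for each of the finitely many triples of stabilizer states, only finitely many $\alpha$ (and no transcendental $\alpha$) can put $\psi_\alpha^{\otimes 2}$ in their span, and then analyze the supports $A_i$ of the three stabilizer states against the $9$-element support $R=\{00,01,10\}^{\times 2}$. Your upper bounds and the bound $\chi([\psi_\alpha])\geq 2$ are fine. But the core of the lower bound --- the case analysis you defer as ``the main obstacle'' --- is exactly where all the work lies, and the one concrete combinatorial claim you offer to organize it is false: nine points \emph{can} be covered by three $2$-dimensional affine subspaces of $\field_2^4$. For instance, $R$ itself is covered by $R_2=\{0101,0110,1001,1010\}$, $\{0000,0001,0010,0011\}$, and $\{0000,0100,1000,1100\}$, each of which is an affine (indeed the latter two are linear) subspace. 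So your planned branch ``some $A_i$ has size $\geq 8$'' is not the relevant one, and a pure covering argument cannot finish the proof.

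The missing idea is that the covering constraint must be interleaved with polynomial relations in $\alpha$. The paper first inverts the $3\times 3$ linear system to write the decomposition coefficients as polynomials $f(\alpha),g(\alpha),h(\alpha)\in\mathbb{Q}(i)[x]$; then any point of some $A_i$ lying \emph{outside} $R$ (i.e.\ cancellation among the $\sigma_i$, a possibility your covering setup does not address) forces a non-trivial $\mathbb{Q}(i)$-linear relation among $f,g,h$, which only finitely many (and no transcendental) $\alpha$ satisfy; the same mechanism forces each $A_i$ to either contain $R_2$ or be disjoint from it. Only after these reductions does the argument become combinatorial: each $A_i$ containing $R_2$ must equal $R_2$, and the remaining five points $R_0\cup R_1$ contain no affine subspace of size greater than $2$, so two subspaces cannot cover them --- contradiction. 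Your alternative suggestion (exhibit one good $\alpha$ and invoke that the curve $\alpha\mapsto[\psi_\alpha^{\otimes 2}]$ meets the closed set $\Sigma_3(\stabn)$ in finitely many points, all algebraic because everything is defined over $\mathbb{Q}(i)$) is a valid framing of the finiteness-and-transcendence deduction, but it still requires certifying a single value of $\alpha$, which reduces to the same undone analysis.
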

\begin{proof}
The fact that $\chi(\psi_\alpha)=2$ is obvious, so it suffices to prove that $\chi(\psi_\alpha^{\otimes 2})=4$. Note that, since the imaginary unit $i$ is algebraic over $\mathbb{Q}$, $\alpha$ is transcendental over $\mathbb{Q}$ if and only if it is transcendental over $\mathbb{Q}(i)$. Since there are only finitely many stabilizer states mod phase, it suffices to prove that for any set of three stabilizer states $\{\sigma_1,\sigma_2,\sigma_3\} \subseteq \setft{Stab}_2$, there are at most finitely many $\alpha$ (and no $\alpha$ which are transcendental over $\mathbb{Q}$) for which
\begin{align}\label{eq:alpha}
\psi_\alpha^{\otimes 2}\in \spn \{\sigma_1,\sigma_2,\sigma_3\}.
\end{align}
For each $j\in [3],$ we may assume that
\begin{align*}
\sigma_j = \frac{1}{\sqrt{\abs{A_j}}} \sum_{x \in A_j} i^{l_j(x)} \cdot (-1)^{q_j(x)} \cdot \mathrm{e}_x
\end{align*}
for some affine linear subspace $A_j \subseteq \field_2^2$, linear functional $l_j$, and quadratic form $q_j$, as in~\eqref{eq:normalformofstabstate}. Let $S=(\sigma_1,\sigma_2,\sigma_3)\in \mathbb{Q}(i)^{\{0,1\}^{\replacement{2}} \times 3}$. For every $\alpha$ that satisfies~\eqref{eq:alpha}, there exist complex numbers $\beta_\alpha, \gamma_\alpha, \lambda_\alpha \in \complex$ for which $\psi_\alpha^{\otimes 2}=\beta_\alpha \sigma_1+\gamma_\alpha \sigma_2+\lambda_\alpha \sigma_3$, and a $3 \times 3$ submatrix $T$ of $S$ for which 
\begin{align}\label{eq:T}
\begin{bmatrix}
T &\replacement{\vert} -\id_3
\end{bmatrix}
\begin{bmatrix}
\beta_\alpha\\
\gamma_\alpha\\
\lambda_\alpha\\
1\\
\alpha\\
\alpha^2
\end{bmatrix}=0.
\end{align}
Since there are only finitely many $3 \times 3$ submatrices of $S$, it suffices to prove that for any choice of $3 \times 3$ submatrix $T$, only finitely many $\alpha$ (and no transcendental $\alpha$) can satisfy~\eqref{eq:T}.

If $T$ is singular, then there exists a non-trivial $\mathbb{Q}(i)$-linear combination of $\{1,\alpha, \alpha^2\}$ that equals zero, which at most finitely many $\alpha$ will satisfy (and in particular, no transcendental $\alpha$ will satisfy).

If $T$ is nonsingular, then by applying $T^{-1}$ to both sides we can find polynomials ${f,g,h \in \mathbb{Q}(i)[x]}$ for which $\beta_\alpha=f(\alpha), \gamma_\alpha=g(\alpha)$, and $\lambda_\alpha=h(\alpha)$ for every choice of $\alpha$ that satisfies~\eqref{eq:T}.

Let $R=\{00,01,10\}^{\times 2} \subseteq \mathbb{F}_2^4$ be the subset of bitstrings corresponding to non-zero coordinates of $\psi_\alpha^{\otimes 2}$. If there exists $x \in A_i \setminus R$ for some $i \in [3]$, then there exists a non-trivial $\mathbb{Q}(i)$-linear combination of $\{f,g,h\}$ which equals zero, which at most finitely many $\alpha$ (and no transcendental $\alpha$) will satisfy. So we may assume $A_i \subseteq R$ for all $i \in [3]$.

Let $R_0=\{0000\}$, $R_1=\{0001,0010,0100,1000\}$, and $R_2=\{0101, 0110,1001,1010\}$, so $R=R_0\cup R_1 \cup R_2$. If there exists $x \in A_1 \cap R_2$ and $y \in R_2 \setminus A_1$, then there exist linear combinations
\begin{align*}
\star f(\alpha)+&\cdot g(\alpha) + \cdot h(\alpha)=\alpha^2\\
&\cdot g(\alpha) + \cdot h(\alpha) = \alpha^2,
\end{align*}
where each $\cdot \in \complex$ denotes an arbitrary complex number we do not bother to name, and $\star \in \complex^\times$ denotes an arbitrary non-zero complex number we do not bother to name. Subtracting these equations yields a non-trivial $\mathbb{Q}(i)$-linear combination of $\{f,g,h\}$ that equals zero, which at most finitely many $\alpha$ (and no transcendental $\alpha$) will satisfy. By symmetry, for all $i \in [3]$, either $R_2 \subseteq A_i$ or $R_2 \cap A_i = \emptyset$. If $R_2 \subseteq A_i$, then $R_2=A_i$ because this is the only affine linear subspace contained in $R$ that contains $R_2$. Without loss of generality, we may assume $A_1=R_2$. If $A_2=R_2$, then $A_3=R_0 \cup \replacement{R_1}$, which is not an affine linear subspace, a contradiction. Otherwise, $A_2 \cup A_3 = R_0 \cup R_1$, which is also a contradiction, as there are no affine linear subspaces of size greater than 2 contained in $R_0 \cup R_1$. This completes the proof.
\end{proof}

\section{Generic stabilizer rank}\label{sec:genericstabilizerrank}
\begin{fact}\label{fact:generic}
For any positive integer $n$, all but finitely many qubit states mod phase $\ppsi \in \proj^1$ maximize $\chi(\psin)$. Similarly, all but finitely many qubit states mod phase $\ppsi \in \proj^1$ maximize $\chi^{\reals}(\psin)$.
\end{fact}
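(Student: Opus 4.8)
The plan is to realize the set of states $\ppsi$ with $\chi(\ppsin)$ small as a Zariski-closed subset of the curve $\proj^1$, and then invoke that a proper closed subset of $\proj^1$ is finite. Write $\nu_n\colon \proj^1 \to \Pn$ for the map $\ppsi \mapsto \ppsin$; if $\psi = a\,\e_0 + b\,\e_1$ then $\psin = \sum_{x\in\Fn} a^{n-\abs{x}}b^{\abs{x}}\,\e_x$, so each homogeneous coordinate of $\nu_n(\ppsi)$ is a degree-$n$ monomial in $(a,b)$ and $\nu_n$ is a morphism of projective varieties (the $n$-th Veronese embedding, followed by the inclusion of the symmetric subspace). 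The crucial point is that for every integer $r\geq 1$,
\begin{align}
Z_r := \bigl\{\ppsi \in \proj^1 : \chi(\ppsin) \leq r\bigr\} = \nu_n^{-1}\bigl(\Sigma_r(\stabn)\bigr),
\end{align}
because $\chi(\ppsin)\leq r$ means precisely $\ppsin\in\Sigma_r(\stabn)$. As recalled above, $\Sigma_r(\stabn)$ is Zariski closed in $\Pn$ — it is the finite union of the projective spans of the $r$-element subsets of the finite set $\stabn$ — so $Z_r$ is Zariski closed in $\proj^1$.

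Next I would observe that the quantity $m := \max_{\ppsi\in\proj^1}\chi(\ppsin)$ is well defined: since $\stabn$ spans $\Pn$, every state of $\Pn$ has stabilizer rank at most $2^n$, so $\chi(\ppsin)$ ranges over the finite set $\{1,\dots,2^n\}$ and attains a maximum. Then $Z_{m-1}$ is a \emph{proper} Zariski-closed subset of $\proj^1$ — proper because some $\ppsi$ has $\chi(\ppsin)=m$ and hence lies outside $Z_{m-1}$. Every proper Zariski-closed subset of the irreducible curve $\proj^1$ is finite (it is the common zero locus of a nonzero homogeneous form in two variables). Therefore the complement $\proj^1 \setminus Z_{m-1}$, which is exactly the set of $\ppsi$ maximizing $\chi(\ppsin)$, is cofinite, proving the first assertion.

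For the real stabilizer rank I would repeat the argument verbatim with $\stabn^\reals$ replacing $\stabn$: the set $\{\ppsi : \chi^\reals(\ppsin)\leq r\}$ equals $\nu_n^{-1}(\Sigma_r(\stabn^\reals))$ and is Zariski closed, and $\chi^\reals(\ppsin)\leq 2^n$ for all $\ppsi$ because $\stabn^\reals$ already contains every computational basis state $[\e_x]$ (take $A=\{x\}$ and $l=q=0$ in the normal form~\eqref{eq:normalformofstabstate}) and hence spans $\Pn$. The same finiteness of proper closed subsets of $\proj^1$ concludes.

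The argument is soft and I do not anticipate a genuine obstacle; the one point that must be handled carefully is the verification that $Z_r$ is Zariski closed, i.e. that $\nu_n$ is a morphism and that each secant set $\Sigma_r(\stabn)$, $\Sigma_r(\stabn^\reals)$ is closed — this is exactly where it matters that $\stabn$ is a finite set of points rather than a positive-dimensional variety. It is also worth stating ``maximize'' in the right way: as $\chi(\ppsin)$ is integer-valued and bounded, the content of the claim is that all but finitely many $\ppsi$ attain the maximum value $m$, and it is precisely this finite exceptional set that is identified with $Z_{m-1}$.
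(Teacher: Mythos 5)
Your proof is correct and is essentially the paper's own argument: the paper intersects the irreducible Veronese curve $\nu_n(\proj^1)$ with the closed variety $\Sigma_r(\stabn)$ inside $\Pn$ and concludes the intersection is finite or everything, whereas you pull $\Sigma_r(\stabn)$ back along $\nu_n$ to a proper Zariski-closed (hence finite) subset of $\proj^1$ — the same dimension/irreducibility argument phrased in the source rather than the target. Your added remarks (that the maximum is attained because $\stabn$ and $\stabn^\reals$ span, so the rank is bounded by $2^n$) are correct and fill in details the paper leaves implicit.
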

This fact, the proof of which we defer until after introducing this section, motivates us to define the $n$-th \textit{generic stabilizer rank} as
\begin{align}
\chi_n=\max_{\psi \in \Sp(\complex^2)} \chi(\psi^{\otimes n}),
\end{align}
and the $n$-th \textit{generic real stabilizer rank} as
\begin{align}
\chi_n^{\reals}=\max_{\psi \in \Sp(\complex^2)} \chi^{\reals}(\psi^{\otimes n}).
\end{align}
In Proposition~\ref{real_reduction} we prove that $\chi_n$ and $\chi_n^{\reals}$ differ by at most a constant factor of two. This motivates the study of $\chi_n^{\reals}$, since it has the same scaling as $\chi_n$, but may be easier to work with. Note that $\chi_n$ upper bounds $\chi(T^{\otimes n})$.

In this section, we prove bounds on $\chi_n$ and $\chi_n^{\reals}$, and observe some useful reductions for studying these quantities. In Proposition~\ref{cor:improved} we use Fact~\ref{fact:generic} to obtain a modest improvement of the best-known upper bound on $\chi_n$. In Proposition~\ref{method} we prove a useful reduction for studying $\chi_n$ (respectively, $\chi_n^{\reals}$), namely, that there must exist a single set of $\chi_n$ stabilizer states (resp. $\chi_n^{\reals}$ real stabilizer states) that span the symmetric subspace. In other words, there exists a single set of $\chi_n$ stabilizer states (resp. $\chi_n^{\reals}$ real stabilizer states) such that, for every qubit state $\psi$, the state $\psin$ lies in the span of the set. In Proposition~\ref{prop:another_upper} we introduce a technique for upper bounding $\chi_n$ when upper bounds on $\chi(\psin)$ are known for sufficiently many (linear in $n$) qubit states mod phase $\ppsi \in \proj^1$, and obtain similar results for $\chi_n^{\reals}$. In Proposition~\ref{prop:upper_bound} we refine Fact~\ref{fact:generic} to obtain quantitative upper bounds on the (finite) number of states mod phase of sub-generic stabilizer rank and sub-generic real stabilizer rank.

Before proceeding, we prove Fact~\ref{fact:generic}. We refer the reader to~\cite{harris2013algebraic} for the basic algebraic-geometric definitions and arguments used in this proof, and elsewhere in this section. In this proof and in others in this section, we only prove the statement for $\chi_n$, as the proof for $\chi_n^{\reals}$ is essentially identical.
\begin{proof}[Proof of Fact~\ref{fact:generic}]
We prove only the statement for $\chi_n$. Let
\begin{align}
\nu_n(\proj^1)=\{\pphin : \pphi \in  \proj^1\}\subseteq \Pn,
\end{align}
which happens to be the image of the $n$-th \emph{Veronese embedding} of $\proj^1$, embedded in $\Pn$. The set $\nu_n(\proj^1)$ forms a 1-dimensional irreducible projective variety, so its intersection with any other projective variety is either empty, zero-dimensional, or equal to $\nu_n(\proj^1)$. Hence, for any positive integer $r$, $\Sigma_r(\cstabn) \cap \nu_n(\proj^1)$ is either a finite set of points, or $\Sigma_r(\cstabn) \supseteq \nu_n(\proj^1).$ This completes the proof.
\end{proof}

Fact~\ref{fact:generic} immediately implies the following slight improvement of the upper bound $\chi_n= \mathcal{O}((n+1)2^{n/2})$ obtained in~\cite[Theorem 3]{2021}:

\begin{proposition}\label{cor:improved}
$\chi_n = \mathcal{O}(2^{n/2})$.
\end{proposition}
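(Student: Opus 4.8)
The plan is to leverage Fact~\ref{fact:generic} to reduce the bound on $\chi_n$ to a statement about a \emph{single} well-chosen qubit state, and then to exhibit one such state whose $n$-fold tensor power has stabilizer rank $\mathcal{O}(2^{n/2})$. By Fact~\ref{fact:generic}, all but finitely many states $\ppsi \in \proj^1$ satisfy $\chi(\ppsin)=\chi_n$; in particular, it suffices to produce a \emph{single} state $\pphi$ that is not a stabilizer state for which $\chi(\pphin)=\mathcal{O}(2^{n/2})$, since then $\chi_n \le \chi(\ppsin)$ for generic $\ppsi$, but generically $\chi(\ppsin)$ cannot exceed the maximum, and we only need an upper bound witnessed at one generic point. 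Concretely, I would argue: pick any non-stabilizer $\pphi$ for which the desired upper bound is already known or easy to establish; then either $\pphin$ itself realizes the generic rank (if $\pphi$ lies outside the finite exceptional set), giving $\chi_n = \chi(\pphin) = \mathcal{O}(2^{n/2})$; or, if $\pphi$ happens to be exceptional, perturb it slightly to a nearby non-stabilizer state still satisfying the bound by continuity/semicontinuity of rank — but in fact the cleaner route is to note that the exceptional set is finite, so we may simply choose $\pphi$ to be the $T$-state or an $H$-type state and verify it is generic, or invoke that any state achieving a known $\mathcal{O}(2^{n/2})$ bound can be assumed generic after a Clifford-equivalence/limiting argument.

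The key input is the known upper bound $\chi([T^{\otimes n}]) = \mathcal{O}((n+1)2^{n/2})$ from Qassim et al.\ combined with a sharper analysis that removes the polynomial factor for \emph{some} state in the Veronese curve. First I would recall that $\chi([T^{\otimes n}]) \le \chi_n$ trivially, so this does not immediately help; instead, the improvement must come from the observation that the construction underlying the Qassim et al.\ bound produces, for each $n$, a spanning set of stabilizer states of size $\mathcal{O}((n+1)2^{n/2})$ for the symmetric subspace $S^n(\complex^2)$, and that this is wasteful: since $\dim S^n(\complex^2) = n+1$, and the known subadditive/recursive structure of stabilizer decompositions of $\taun$ for product $\tau$ behaves like $\chi(\tau^{\otimes 2m}) \lesssim \chi(\tau^{\otimes m})^2$ but with savings from the $H$-state magic-state routing, one can bootstrap a bound of the form $\chi_n \le C \cdot 2^{n/2}$ for an absolute constant $C$. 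So the step I would carry out is: (i) state that $\chi_n$ equals $\chi(\pphin)$ for generic $\pphi$; (ii) choose $\pphi$ so that a $2^{n/2}$-up-to-constant decomposition is available — e.g.\ via the standard trick of grouping qubits into $n/2$ pairs, using that $\chi(\phi^{\otimes 2}) \le 3$ for a single qubit, hence $\chi(\phi^{\otimes n}) \le 3^{n/2} = \mathcal{O}(2^{0.79 n})$; but that is worse than $2^{n/2}$, so the real argument must instead use that for the specific Clifford-magic $H$-state one has an improved exact decomposition.

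Actually the cleanest plan, and the one I expect the paper to follow, is this: combine Fact~\ref{fact:generic} with the Bravyi--Smith--Smolin / Qassim-type result that $\chi([H^{\otimes n}]) = \mathcal{O}(2^{n/2})$ (the $\sqrt{2}^n$ scaling for the $H$-state or equivalently $T$-state is the classical asymptotic, the $(n+1)$ factor being an artifact of the \emph{uniform-over-all-states} construction rather than the single-state one), and then observe that $[H^{\otimes n}]$ is generic, i.e.\ lies outside the finite exceptional set of Fact~\ref{fact:generic}, so $\chi_n = \chi([H^{\otimes n}]) = \mathcal{O}(2^{n/2})$. The main obstacle, and the step requiring care, is verifying that $[H]$ (or whichever single state one picks) is indeed \emph{generic} in the sense of Fact~\ref{fact:generic} — i.e.\ that it does not accidentally fall into the finite set of states with sub-generic rank. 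This is plausibly handled by a monotonicity/limit argument: since the exceptional set is finite, $[H]$ is a limit of generic states, and $\chi(\ppsin)$ is lower-semicontinuous (rank can only drop under limits, as $\Sigma_r(\stabn)$ is closed), so $\chi([H^{\otimes n}]) \le \chi_n$ always, and equality holds unless $[H]$ is exceptional — but if it were exceptional the generic rank would be even larger, which we must then rule out by a separate argument or simply by noting that the $\mathcal{O}(2^{n/2})$ bound holds for an entire open dense family (Clifford images, or a one-parameter family of non-stabilizer states) obtained by the same construction, which therefore contains generic points. I would write the proof to choose a one-parameter family rather than a single state, precisely to sidestep the genericity verification.
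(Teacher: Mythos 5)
Your final plan is exactly the paper's proof: invoke Fact~\ref{fact:generic} together with a known $\mathcal{O}(2^{n/2})$ stabilizer-rank bound that holds for an \emph{infinite} family of states, so that the family must contain a point outside the finite exceptional set and hence of generic rank. The paper takes this infinite family to be the equatorial states of Qassim et al.\ (their Theorem~2), which is precisely the ``one-parameter family to sidestep the genericity verification'' you settle on after discarding the single-state and bootstrapping routes.
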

\begin{proof}
This follows directly from our Fact~\ref{fact:generic} and~\cite[Theorem 2]{2021}, which states that equitorial states (an infinite family of states mod phase) have stabilizer rank $\mathcal{O}(2^{n/2})$.
\end{proof}

We next prove that a single set of $\chi_n$ stabilizer states (respectively, $\chi_n^{\reals}$ real stabilizer states) spans $S^n(\complex^2).$

\begin{proposition}\label{method}
For any positive integer $n$, there exists a set of stabilizer states \linebreak$\{\sigma_1,\dots, \sigma_{\chi_n}\}\subseteq \stabn$ for which
\begin{align}
S^n(\complex^2)\subseteq \spn\{\sigma_1,\dots, \sigma_{\chi_n}\}.
\end{align}
Similarly, there exists a single set of real stabilizer states $\{\sigma_1,\dots, \sigma_{\chi_n^{\reals}}\}\subseteq \stabn^{\reals}$ for which
\begin{align}
S^n(\complex^2)\subseteq \spn\{\sigma_1,\dots, \sigma_{\chi_n^{\reals}}\}.
\end{align}
\end{proposition}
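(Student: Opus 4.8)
The plan is to combine the definition of $\chi_n$ with exactly the irreducibility observation used in the proof of Fact~\ref{fact:generic}. Write $Y=\nu_n(\proj^1)=\{\pphin : \pphi\in \proj^1\}\subseteq \Pn$ for the image of the $n$-th Veronese embedding of $\proj^1$, a $1$-dimensional irreducible projective variety whose affine cone $\hat Y$ spans the symmetric subspace $S^n(\complex^2)$. Since $\stabn$ is finite and $\chi_n\leq 2^n\leq \abs{\stabn}$, there are only finitely many subsets of $\stabn$ of size $\chi_n$; enumerate them as $\mathcal{T}_1,\dots,\mathcal{T}_M$, write $\mathcal{T}_j=\{[\sigma^{(j)}_1],\dots,[\sigma^{(j)}_{\chi_n}]\}$, and set $W_j=\spn\{\sigma^{(j)}_1,\dots,\sigma^{(j)}_{\chi_n}\}\subseteq \V_n$, so that each $\proj W_j\subseteq \Pn$ is a linear, hence closed, subvariety.

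First I would check that the $\proj W_j$ cover $Y$. By definition of $\chi_n$ we have $\chi(\pphin)\leq \chi_n$ for every $\pphi\in\proj^1$, so $\phin$ lies in the linear span of some set of at most $\chi_n$ stabilizer states; padding that set with arbitrary further stabilizer states (there are more than enough, as $\abs{\stabn}\geq \chi_n$) yields a set of exactly $\chi_n$ stabilizer states whose span still contains $\phin$. Hence $\pphin\in \proj W_j$ for some $j$, i.e.
\begin{align}
Y=\bigcup_{j=1}^{M} \bigl(Y\cap \proj W_j\bigr).
\end{align}
Next I would run the irreducibility argument from the proof of Fact~\ref{fact:generic}: each $Y\cap \proj W_j$ is the intersection of the irreducible $1$-dimensional variety $Y$ with a projective variety, so it is either a finite set of points or all of $Y$. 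If all $M$ of them were finite, their union would be finite, contradicting the display above since $Y$ is infinite. So $Y\cap \proj W_{j^*}=Y$ for some $j^*$, i.e. $Y\subseteq \proj W_{j^*}$, equivalently $\hat Y\subseteq W_{j^*}$. Taking linear spans and using that $\hat Y$ spans $S^n(\complex^2)$ gives
\begin{align}
S^n(\complex^2)=\spn(\hat Y)\subseteq W_{j^*}=\spn\{\sigma^{(j^*)}_1,\dots,\sigma^{(j^*)}_{\chi_n}\},
\end{align}
so $\mathcal{T}_{j^*}$ is the desired set. The statement for $\chi_n^{\reals}$ is proved verbatim, replacing $\stabn$ by $\stabn^{\reals}$ and $\chi_n$ by $\chi_n^{\reals}$, and using $\chi^{\reals}(\phin)\leq \chi_n^{\reals}$ in place of $\chi(\phin)\leq\chi_n$.

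There is no genuine obstacle here; the only nontrivial step is the one-line irreducibility observation, and it is the same one already used for Fact~\ref{fact:generic}. The remaining points need only care, not ideas: that there are \emph{finitely many} size-$\chi_n$ subsets of $\stabn$ (so the covering of $Y$ is a finite union of closed sets), the harmless padding step that lets one work with subsets of size exactly $\chi_n$, and the standard non-degeneracy of the Veronese curve in $\proj(S^n(\complex^2))$. As a byproduct, since $S^n(\complex^2)$ is $(n+1)$-dimensional and is contained in the span of $\chi_n$ vectors, this also yields $\chi_n\geq n+1$, and likewise $\chi_n^{\reals}\geq n+1$.
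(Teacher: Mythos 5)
Your proposal is correct and follows essentially the same route as the paper: both arguments observe that $\Sigma_{\chi_n}(\stabn)$ is a finite union of linear (hence closed) subvarieties covering the irreducible Veronese curve $\nu_n(\proj^1)$, conclude by irreducibility that one of these linear spans must contain the whole curve, and then use that the affine cone over $\nu_n(\proj^1)$ spans $S^n(\complex^2)$. Your write-up merely makes explicit the padding step and the "finite union of closed sets" argument that the paper leaves implicit.
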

Since $\dim(S^n(\complex^2))=n+1$, it follows from Proposition~\ref{method} that $\chi_n \geq n+1$. We note that a similar proof as below can be used to show that a generic product state of the form $\psi_1 \otimes \dots \otimes \psi_n$, for qubit states $\psi_i \in \Sp(\complex^2)$, has stabilizer rank $2^n$, where in this context we define \textit{generic} in the same algebraic-geometric sense as in~\cite{entangled_sub}. \replacement{In Proposition~\ref{prop:stabilizerrankmaximal} we have presented an explicit sequence of product states of stabilizer rank $2^n$, which is maximal.}
\begin{proof}[Proof of Proposition~\ref{method}]
We prove only the statement for $\chi_n$. Let $r=\chi_n$, so $\Sigma_r(\cstabn) \supseteq \nu_n(\proj^1),$ where $\nu_n(\proj^1)$ is defined in the proof of Fact~\ref{fact:generic}. Since $\nu_n(\proj^1)$ is irreducible, and $\Sigma_r(\cstabn)$ is reducible into a finite union of projective $(r-1)$-dimensional linear subspaces, one of these subspaces must contain $\nu_n(\proj^1)$. To complete the proof, recall that the affine cone over $\spn(\nu_n(\proj^1))$ is $S^n(\complex^2)$.
\end{proof}

The next proposition shows that $\chi_n^{\reals}$ and $\chi_n$ differ by at most a constant factor of two, which motivates the study of $\chi_n^{\reals}$.
\begin{proposition}\label{real_reduction}
For any positive integer $n$, it holds that $\chi_n \leq \chi^{\reals}_n \leq 2 \chi_n$.
\end{proposition}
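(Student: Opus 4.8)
The left inequality $\chi_n \le \chi_n^{\reals}$ is essentially definitional: every real stabilizer state is a stabilizer state, so any decomposition witnessing the real stabilizer rank of some $[\psi^{\otimes n}]$ is also a (complex) stabilizer decomposition, hence $\chi([\psi^{\otimes n}]) \le \chi^{\reals}([\psi^{\otimes n}])$ for every $[\psi] \in \proj^1$, and taking maxima over $[\psi]$ gives $\chi_n \le \chi_n^{\reals}$. I would dispense with this in one sentence.

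For the right inequality $\chi_n^{\reals} \le 2\chi_n$, the plan is to take an arbitrary qubit state $[\psi] \in \proj^1$ and an optimal complex stabilizer decomposition $\psi^{\otimes n} = \sum_{i=1}^{r} c_i \sigma_i$ with $r \le \chi_n$ and $[\sigma_i] \in \stabn$, and to convert it into a real-stabilizer decomposition of length at most $2r$. The key structural fact I would invoke is the normal form~\eqref{eq:normalformofstabstate}: every stabilizer state $[\sigma_i]$ can be written as $\sum_{x \in A_i} i^{l_i(x)}(-1)^{q_i(x)} \e_x$, and the only source of non-realness is the factor $i^{l_i(x)}$, i.e. the linear form $l_i$. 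Splitting $A_i$ according to the parity of $l_i$, write $\sigma_i = \sigma_i^{(0)} + i\,\sigma_i^{(1)}$ where $\sigma_i^{(0)} = \sum_{x \in A_i,\, l_i(x)=0} (-1)^{q_i(x)} \e_x$ collects the terms with $l_i(x)=0$ and $\sigma_i^{(1)}$ collects those with $l_i(x)=1$; both $\sigma_i^{(0)}$ and $\sigma_i^{(1)}$ have only real ($0,\pm1$) coordinates.

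The remaining step — and the one requiring a small argument rather than pure bookkeeping — is to check that $\sigma_i^{(0)}$ and $\sigma_i^{(1)}$, when nonzero, are proportional to \emph{real stabilizer states}, not merely real tensors. For $\sigma_i^{(1)}$ one should note that the set $\{x \in A_i : l_i(x)=1\}$ is an affine translate of the linear subspace $\{x : l_i(x) = 0\}$ (after translating $A_i$ to a linear subspace $V_i$, the condition $l_i(x)=1$ picks out a coset of $\ker(l_i|_{V_i})$, provided $l_i$ is nonzero on $V_i$; if $l_i$ vanishes on $V_i$ then one of the two pieces is empty and the other is all of $\sigma_i$, which is then already real after adjusting a global phase). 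So each nonzero piece is supported on an affine linear subspace with a quadratic-form sign pattern and trivial linear part, hence is a real stabilizer state by the characterization in Section~\ref{sec:basicprops}. Substituting $\sigma_i = \sigma_i^{(0)} + i \sigma_i^{(1)}$ into $\psi^{\otimes n} = \sum_i c_i \sigma_i$ then exhibits $\psi^{\otimes n}$ as a complex combination of at most $2r \le 2\chi_n$ real stabilizer states, so $\chi^{\reals}([\psi^{\otimes n}]) \le 2\chi_n$; taking the max over $[\psi]$ completes the proof. I expect the only genuine subtlety to be the edge-case analysis of when $l_i$ restricted to the linear part $V_i$ is zero versus nonzero, ensuring in the nonzero case that the coset $\{l_i = 1\}$ genuinely is an affine subspace of the right dimension.
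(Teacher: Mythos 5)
Your proof is correct, but it takes a more direct route than the paper. The paper first invokes Proposition~\ref{method} to reduce to spanning the symmetric subspace by the real tensors $\sum_{\abs{x}=k}\e_x$, and then takes the real part of an optimal complex decomposition, using the identity $\setft{Re}((a+ib)\sigma_j)=(a-b)\setft{Re}(\sigma_j)+b\rho_j$ with $\rho_j=\sum_{x\in A_j}(-1)^{q_j(x)+l_j(x)}\e_x$; note that $q_j+l_j$ is again a quadratic form, so $\rho_j$ is manifestly a real stabilizer state supported on all of $A_j$, which sidesteps half of the emptiness case analysis you carry out. Your version instead splits $\sigma_i=\sigma_i^{(0)}+i\sigma_i^{(1)}$ and substitutes directly, exploiting the fact that the definition of $\chi^{\reals}$ permits complex coefficients. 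This buys you two things: you never need Proposition~\ref{method}, and you never need the target tensor to be real, so your argument in fact establishes the stronger pointwise inequality $\chi^{\reals}(\ppsi)\leq 2\chi(\ppsi)$ for \emph{every} state $\ppsi\in\Pn$ (the paper only remarks on this for real states). The price is the edge-case analysis you correctly flag: $\sigma_i^{(0)}$ or $\sigma_i^{(1)}$ may vanish (when $l_i$ is constant on $A_i$), and when both are nonzero one must check that $A_i\cap l_i^{-1}(0)$ and $A_i\cap l_i^{-1}(1)$ are affine subspaces carrying the restricted quadratic form $q_i$ with trivial linear part; your handling of this via the coset of $\ker(l_i|_{V_i})$ is right. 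Both proofs are sound; yours is slightly more general and self-contained, the paper's is slightly slicker on the real-stabilizer-state verification.
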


\begin{proof}
The first inequality is obvious, so it suffices to prove $\chi^{\reals}_n \leq 2 \chi_n$. Since the tensors $\sum_{\substack{\abs{x}=k}} \mathrm{e}_x$ for $k \in [n]$ form a basis for $S^n(\complex^2)$, then by Proposition~\ref{method}, $\chi_n$ is the minimum number for which there exists a set of stabilizer states $\{\sigma_1,\dots, \sigma_{\chi_n}\} \subseteq \stabn$ such that 
\begin{align}\label{eq:spanny}
\sum_{\substack{x \in \Fn\\\abs{x}=k}} \mathrm{e}_x \in \spn\{\sigma_1,\dots, \sigma_{\chi_n}\}
\end{align}
for all $k \in \{0,1,\dots, n\}$. Without loss of generality, for each $j \in [\chi_n]$ it holds that
\begin{align}
\sigma_j = \frac{1}{\sqrt{\abs{A_j}}} \sum_{x \in A_j} i^{l_j(x)} \cdot (-1)^{q_j(x)} \cdot \mathrm{e}_x
\end{align}
for some affine linear subspace $A_j \subseteq \Fn$, linear functional $l_j$, and quadratic form $q_j$ (see~\eqref{eq:normalformofstabstate}). For each $j \in [\chi_n]$, we define $\rho_j$ to be the state

\begin{align}
\rho_j=\frac{1}{\sqrt{\abs{A_j}}}\sum_{x \in A_j} (-1)^{q_j(x)+l_j(x)} \cdot \mathrm{e}_x,
\end{align}
and observe that for any real numbers $a,b \in \reals$, it holds that
\begin{align}\label{eq:realification}
\setft{Re}((a+ib)\sigma_j)=(a-b) \setft{Re}({\sigma}_j)+b \rho_j,
\end{align}
where $\setft{Re}(\cdot)$ is the real part with respect to the computational basis. For each $k \in \{0,1,\dots, n\}$, there exist real numbers $a_1,\dots, a_{\chi_n}, b_1,\dots, b_{\chi_n}\in \reals$ for which
\begin{align}
\sum_{\substack{\abs{x}=k}} \mathrm{e}_x&=\sum_{j \in [\chi_n]} (a_j+i b_j) \sigma_i\\
								&=\setft{Re}\left(\sum_{j \in [\chi_n]} (a_j+i b_j) \sigma_j \right)\\
								&=\sum_{j \in [\chi_n]} \big((a_j-b_j) \setft{Re}({\sigma}_j) +b \rho_j\big),
\end{align}
where the first line follows from~\eqref{eq:spanny}, the second follows from the fact that the tensor is already real, and the third follows from~\eqref{eq:realification}. Hence,
\begin{align}
S^n(\complex^2)\in \spn_{\complex}\{\setft{Re}({\sigma}_1),\dots, \setft{Re}({\sigma}_{\chi_n}), \rho_1,\dots, \rho_{\chi_n}\}.
\end{align}
Since $\setft{Re}(\sigma_j)$ and $\rho_j$ are clearly proportional to real stabilizer states for all $j \in [\chi_n]$, it follows that ${\chi_n^{\reals} \leq 2 \chi_n}$.
\end{proof}

Using similar techniques as in the proof of Proposition~\ref{method}, it is straightforward to show that $\chi(\psi)\leq \chi^{\reals}(\psi) \leq 2 \chi(\psi)$ for any real quantum state $\psi \in \Spn$.

The next proposition tells us that upper bounds on the stabilizer ranks (respectively, real stabilizer ranks) of any set of $n+1$ pairwise non-collinear states of the form $\psin$ implies an upper bound on $\chi_n$ (resp. $\chi_n^{\reals}$).
\begin{proposition}\label{prop:another_upper}
Let $r$ be a positive integer. If there exists a set of $n+1$ pairwise non-collinear qubit states $\{\psi_1,\dots, \psi_{n+1}\} \subseteq \Sp(\complex^2)$ with $\chi(\psi_i^{\otimes n}) \leq r$ for all $i \in [n+1]$, then $\chi_n \leq r(n+1)$. Similarly, if there exists a set of $n+1$ pairwise non-collinear qubit states $\{\psi_1,\dots, \psi_{n+1}\} \subseteq \Sp(\complex^2)$ with $\chi^{\reals}(\psi_i^{\otimes n}) \leq r$ for all $i \in [n+1]$, then $\chi_n^{\reals} \leq r(n+1)$.
\end{proposition}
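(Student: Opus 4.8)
The plan is to exploit the fact that the map $[\psi]\mapsto[\psi^{\otimes n}]$ is the $n$-th Veronese embedding $\nu_n(\proj^1)\subseteq\Pn$, whose span is the projectivization of the symmetric subspace $S^n(\complex^2)$, which has dimension $n+1$. So $n+1$ distinct points on $\nu_n(\proj^1)$ are linearly independent: the tensors $\psi_1^{\otimes n},\dots,\psi_{n+1}^{\otimes n}$ form a basis of $S^n(\complex^2)$. This is the classical fact that a Vandermonde-type matrix built from $n+1$ distinct parameters is invertible; in the language of the paper, the images $\nu_n([\psi_i])$ of distinct points span the whole ambient space $\mathbb{P}(S^n(\complex^2))=\spn(\nu_n(\proj^1))$. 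This is the structural input that makes the proposition work, and it is the only place where the hypothesis that the $[\psi_i]$ are \emph{distinct} is used.

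First I would invoke Proposition~\ref{method}: there is a single set of $\chi_n$ stabilizer states whose affine cone spans $S^n(\complex^2)$; but actually for this proof I do not even need that — I will go the other direction. Given the hypothesis, for each $i\in[n+1]$ write $\psi_i^{\otimes n}=\sum_{j=1}^{r}c_{ij}\,\sigma_{ij}$ as a superposition of at most $r$ stabilizer states $[\sigma_{ij}]\in\stabn$ (padding with zero coefficients if $\chi([\psi_i^{\otimes n}])<r$). Collect all of these: the set $\Sigma=\{[\sigma_{ij}]:i\in[n+1],\,j\in[r]\}$ has at most $r(n+1)$ elements, and its affine cone contains $\{\psi_1^{\otimes n},\dots,\psi_{n+1}^{\otimes n}\}$ in its span. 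Since the latter is a basis of $S^n(\complex^2)$, we get $S^n(\complex^2)\subseteq\spn\,\hat\Sigma$. Now for an \emph{arbitrary} qubit state $[\psi]$, the tensor $\psi^{\otimes n}$ lies in $S^n(\complex^2)\subseteq\spn\,\hat\Sigma$, so $\psi^{\otimes n}$ is a superposition of at most $r(n+1)$ stabilizer states, i.e. $\chi([\psi^{\otimes n}])\le r(n+1)$. Taking the maximum over $[\psi]$ gives $\chi_n\le r(n+1)$.

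For the real case the argument is verbatim the same, replacing $\stabn$ by $\stabn^{\reals}$ and $\chi$ by $\chi^{\reals}$ throughout; the only point to note is that the coefficients $c_{ij}$ are allowed to be complex in the definition of $\chi^{\reals}$, and nothing in the argument requires them to be real, so combining the $r(n+1)$ real stabilizer states still produces a valid complex superposition expressing any $\psi^{\otimes n}$. I expect the main (mild) obstacle to be cleanly justifying the linear-independence claim for $n+1$ distinct points of $\nu_n(\proj^1)$ — this is elementary (it is the invertibility of a Vandermonde matrix, or the fact that $n+1$ distinct degree-$n$ evaluation functionals are independent), but it should be stated explicitly since it is the load-bearing step; everything else is bookkeeping about collecting decompositions.
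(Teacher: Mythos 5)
Your proposal is correct and follows essentially the same route as the paper's proof: collect the at most $r(n+1)$ stabilizer states from the individual decompositions, and use the fact that any $n+1$ distinct points of the Veronese image $\nu_n(\proj^1)$ span $S^n(\complex^2)$ (which the paper cites rather than proves via Vandermonde, but this is the same load-bearing fact) to conclude that every $\psi^{\otimes n}$ lies in their span. Your remark that the coefficients in a real-stabilizer decomposition may be complex is consistent with the paper's definition of $\chi^{\reals}$, so the real case goes through verbatim as you say.
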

\begin{proof}
We prove only the statement for $\chi_n$. First note that $\spn\{\psi_1^{\otimes n}, \dots, \psi_{n+1}^{\otimes n}\}= S^n(\complex^2)$ by \cite[Proposition 3.1]{1751-8121-48-4-045303} or \cite[Corollary 18]{Kruskal_gen}. Since $\chi(\psi_i^{\otimes n}) \leq r$ for all $i \in [n+1]$, then any superposition of the states $\psi_1^{\otimes n},\dots, \psi_{n+1}^{\otimes n}$ has stabilizer rank at most $r(n+1)$. It follows that $\chi_n \leq r(n+1)$.
\end{proof}

We close this section by refining Fact~\ref{fact:generic} to give a quantitative upper bound on the number of states of sub-generic rank.

\begin{proposition}\label{prop:upper_bound}
There are at most
\begin{align}
n \binom{\abs{\cstabn}}{\chi_n-1}
\end{align}
states mod phase $\ppsi \in \proj^1$ for which $\chi(\psin)< \chi_n$. Similarly, there are at most
\begin{align}
n \binom{\bigabs{\cstabn^\reals}}{\chi_n^\reals-1}
\end{align}
states mod phase $\ppsi \in \proj^1$ for which $\chi^{\reals}(\psin)< \chi_n^\reals$.
\end{proposition}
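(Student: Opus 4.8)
The plan is to bound the number of ``bad'' states using the fact, established in the proof of Fact~\ref{fact:generic}, that the Veronese curve $\nu_n(\proj^1)$ is a $1$-dimensional irreducible variety, together with a counting argument over the finitely many choices of $\chi_n-1$ stabilizer states. First I would observe that if $\chi(\ppsin) < \chi_n$, then $\pphin \in \Sigma_{\chi_n-1}(\stabn)$, so the bad states correspond exactly to the points of $\nu_n(\proj^1) \cap \Sigma_{\chi_n-1}(\stabn)$. Since $\Sigma_{\chi_n-1}(\stabn)$ is a finite union of projective linear subspaces of dimension $\chi_n-2$ (indexed by the $\binom{\abs{\stabn}}{\chi_n-1}$ choices of $\chi_n-1$ stabilizer states, of which we keep only those tuples that are linearly independent), it suffices to bound, for a single such linear subspace $L$, the size of $\nu_n(\proj^1) \cap L$.

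The key step is to show that $\nu_n(\proj^1) \cap L$ has at most $n$ points whenever it is finite — and it is finite by the irreducibility argument of Fact~\ref{fact:generic}, because $L$ cannot contain all of $\nu_n(\proj^1)$ (the span of $\nu_n(\proj^1)$ is the $n$-dimensional space $S^n(\complex^2)$, whereas $\dim L = \chi_n - 2 < n$, the last inequality being $\chi_n \geq n+1$ from Proposition~\ref{method}). To see the bound of $n$, I would parametrize $\nu_n(\proj^1)$ by $[\e_0+ t\e_1] \mapsto [\sum_{k=0}^n \binom{n}{k} t^k \e_{?}]$ (i.e. by the degree-$n$ Veronese map $\proj^1 \to \proj^n$), so that intersecting with a hyperplane amounts to finding the roots of a nonzero degree-$n$ polynomial in $t$ (plus possibly the point at infinity $[\e_1^{\otimes n}]$). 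Intersecting with the codimension-$(n - (\chi_n-2))$ subspace $L$ is then cutting the curve by several such hyperplanes; since the curve is not contained in $L$, at least one of these polynomials is nonzero, giving at most $n$ affine roots, and one checks the point at infinity is subsumed in this count (or handled by choosing the affine chart appropriately). Hence $\abs{\nu_n(\proj^1) \cap L} \leq n$.

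Combining, the number of bad states is at most $n$ times the number of subspaces $L$, and the number of such subspaces is at most the number of ways to choose $\chi_n-1$ stabilizer states from $\stabn$, namely $\binom{\abs{\stabn}}{\chi_n-1}$ (dropping linearly dependent tuples only decreases the count). This gives the bound $n\binom{\abs{\stabn}}{\chi_n-1}$. The real case is identical, replacing $\stabn$ by $\stabn^\reals$ and $\chi_n$ by $\chi_n^\reals$, and using $\chi_n^\reals \geq \chi_n \geq n+1$ to ensure $\dim L < n$ again.

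The main obstacle I anticipate is making the ``at most $n$ intersection points'' step fully rigorous: one must be careful that the degree of $\nu_n(\proj^1)$ as a curve in $\proj^n$ is exactly $n$, that cutting by a linear subspace not containing the curve yields a finite scheme of length at most $n$ (Bézout for curves), and that this length bounds the number of distinct points. An alternative, more elementary route that avoids invoking Bézout is to directly use the polynomial parametrization: a point $[\e_0 + t\e_1]^{\otimes n}$ lies in $L = \spn\{\sigma_1,\dots,\sigma_{\chi_n-1}\}^\perp$-type conditions iff $t$ satisfies a system of polynomial equations each of degree $\leq n$, not all identically zero, so the solution set has size $\leq n$ (together with a separate check of the single point $[\e_1^{\otimes n}]$, which can be absorbed since a nonzero degree-$\leq n$ polynomial has $\leq n$ roots and we may pick coordinates so that this point is not a common zero). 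I would present this elementary version to keep the argument self-contained.
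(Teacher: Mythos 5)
Your overall strategy is the same as the paper's: reduce to the $\binom{\abs{\stabn}}{\chi_n-1}$ spans of $\chi_n-1$ stabilizer states and show that each such span meets the Veronese curve $\nu_n(\proj^1)$ in at most $n$ points. (The paper justifies the ``at most $n$'' step by citing that any $n+1$ distinct tensors of the form $\psin$ already span $S^n(\complex^2)$, so a span containing $n+1$ of them would contain the whole curve; your degree-$n$/root-counting argument is an equivalent and perfectly acceptable way to get the same bound.) However, one step of your write-up is genuinely wrong as stated: you justify the finiteness of $\nu_n(\proj^1)\cap L$ by claiming $\dim L=\chi_n-2<n$ ``because $\chi_n\geq n+1$.'' That inequality points the wrong way: $\chi_n\geq n+1$ gives $\chi_n-2\geq n-1$, and since $\chi_n$ is not known to equal $n+1$ (it may be, and is expected to be, much larger --- the only known upper bound is $\mathcal{O}(2^{n/2})$), the subspace $L$ will in general have dimension far exceeding $n$. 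So the dimension count does not rule out $L\supseteq\nu_n(\proj^1)$, and the same error recurs in your treatment of the real case.

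The claim you need is still true, but for a different reason: if the span of some $\chi_n-1$ stabilizer states contained all of $\nu_n(\proj^1)$, then every state of the form $\ppsin$ would have stabilizer rank at most $\chi_n-1$, contradicting the fact that $\chi_n$ is attained by some qubit state. Equivalently, once you know $n+1$ distinct points of the curve span $S^n(\complex^2)$, a span containing $n+1$ such points contains the entire curve, and maximality of $\chi_n$ forbids this. With that substitution your argument goes through; note also that the point at infinity $[\e_1^{\otimes n}]$ is genuinely subsumed in the count of $n$ (a linear functional not vanishing identically on the curve gives a homogeneous degree-$n$ form in the parameters $(s,t)$ of $[(s\e_0+t\e_1)^{\otimes n}]$, which has at most $n$ projective zeros), so that loose end closes cleanly.
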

We have computed the quantities $\abs{\cstabn}$ and $\bigabs{\cstabn^\reals}$ at the end of Section~\ref{sec:basicprops}.
\begin{proof}
We prove only the statement for $\chi_n$. Note that any set of $n+1$ pairwise non-collinear states in $\hat{\nu}_1(\proj^1)$ (i.e. states of the form $\psi^{\otimes n}$) span $S^n(\complex^2)$ (see \cite[Proposition 3.1]{1751-8121-48-4-045303} or \cite[Corollary 18]{Kruskal_gen}), so any set of $\chi_n-1$ linearly independent stabilizer states can contain at most $n$ distinct elements of $\hat{\nu}_1(\proj^1)$ in their span. Since there are at most
\begin{align}
\binom{\abs{\cstabn}}{\chi_n-1}
\end{align}
distinct sets of $\chi_n-1$ linearly independent stabilizer states, the bound follows.
\end{proof}

\bibliographystyle{quantum}
\bibliography{stab}
\begin{appendix}
\section{Motivation behind stabilizer rank}\label{app:motivationofstabrank}
In this Appendix we will motivate the definition of stabilizer rank in more depth, and explain how it relates to the simulation cost of quantum circuits. In particular, we will review in detail how the stabilizer rank of a quantum state quantifies the classical simulation cost of applying Clifford gates and computational basis measurements to that state, and review how the stabilizer rank of $n$ copies of the so-called $T$-state quantifies the simulation cost of Clifford+$T$ circuits utilizing $n$ $T$-gates. We use the standard graphical notation for quantum circuits: \replacement{Fixing some natural numbers $m\leq k$, we depict with}

\begin{equation}\label{eq:basiccircuit}
    \Qcircuit @C=.5em @R=.8em {
    & & \qw & \multigate{5}{U} & \qw &\meter\\
    & & \raisebox{4pt}{\vdots} & & &\raisebox{4pt}{\vdots}\\
    & & \qw & \ghost{U} & \qw &\meter\\
    & & \qw & \ghost{U} & &\\
    && \raisebox{4pt}{\vdots} &&&  \\
    & & \qw & \ghost{U} &  & \protect\inputgroupv{1}{6}{1em}{3.8em}{\psi}
    }
\end{equation}
\replacement{a circuit applying a $k$-qubit unitary $U$ to an input state $\psi \in \Sp(\V_k)
$ and measuring an output register consisting of $m$ qubits in computational basis. Without loss of generality, we will always take the first $m$ qubits as output register. In the following, the number of qubits in a circuit as in \eqref{eq:basiccircuit} will always be $k$ and the number of measured qubits $m$ unless specified differently.}

\replacement{Let us start by recalling some basic notions of simulation of quantum circuits. Simulating a quantum circuit as in \eqref{eq:basiccircuit} with a $k$-qubit input state $\psi$ \textit{weakly} means to draw $m$ classical bits according to the output distribution of the circuit. Simulating the circuit \textit{strongly} on the other hand means to be able to compute the probability of a bitstring $x_1\dots x_m$ being the output of the circuit. By Born's rule, this probability is given by}
\begin{equation}\label{eq:overlap}
    P(x_1 \dots x_m) = \psi^* U^\dagger \left( \Pi_{x_1 \dots x_m} \otimes \id_2^{\otimes k-m}  \right) U \ket{\psi}
\end{equation}
\replacement{where $\Pi_{x_1 \dots x_m}$ is the orthogonal projection onto $\spn\{\e_{x_1} \otimes \dots \otimes \e_{x_m}\}$.}

\replacement{Another standard notion of simulation is the so-called $\epsilon$\textit{-strong simulation}. For a bitstring $x_1 \dots x_m$, the task is to approximate the output probability $P(x_1 \dots x_m)$ up to relative error $\epsilon$. More precisely, fix a relative error $\epsilon$.  Given a bitstring $x_1 \dots x_m$, we then want to get an output $\xi$ such that}
\begin{equation*}
    (1-\epsilon ) P(x_1 \dots x_m ) \leq \xi \leq (1 + \epsilon ) P(x_1 \dots x_m )
\end{equation*}
\replacement{where $P$ is the output distribution of the circuit.}
 
Recall the \textit{(affine) Clifford gates} $H, S,$ and $CNOT$ Section \ref{sec:basicprops}. We will later also need the controlled $Z$ gate defined via 
\begin{equation}
    CZ = (\id_2 \otimes H) CNOT (\id_2 \otimes H) = \begin{pmatrix}
    1&0&0&0\\
    0&1&0&0\\
    0&0&1&0\\
    0&0&0&-1\\
    \end{pmatrix}.
\end{equation}
 The Clifford unitaries $U \in \mathrm{Cliff}_k$ are exactly those unitaries composed from $H,S,CNOT,$ and global phase gates only. As done in Section \ref{sec:basicprops}, on can equivalently define the group $\mathrm{Cliff}_k$ as the normalizer of the $k$-qubit Pauli group $\setft{Pauli}_k$ (also defined in Section \ref{sec:basicprops}): Conjugating a Pauli unitary with a Clifford unitary yields another Pauli unitary~\cite{gottesmanphd}.

As a consequence of this fact, we can work very efficiently on certain quantum states using the \textit{stabilizer formalism}, which was introduced in~\cite{gottesmanphd}. We will now briefly recall some basics about the stabilizer formalism and refer the reader to~\cite{nielsen00} for an in-depth discussion and detailed proofs. To start, note that $\e_0^{\otimes k}$ is, up to global phase, the unique quantum state invariant under applying (or \textit{stabilized by}) $Z_i$ for all $i \in [k]$, where $Z_i$ is the tensor product of a Pauli $Z$ on the $i$'th qubit with identities on all other qubits. In fact, one can show that if $P_1, \dots , P_k$ are independent Pauli unitaries (that is, none of them is a product of the others), and the group generated by them does not contain $-\id_2^{\otimes k}$, then there exists up to global phase a unique state $\sigma$ stabilized by $P_1, \dots , P_k$ (see~\cite[Chapter 10.5.1]{nielsen00} for details).

It turns out that if a state $\sigma$ can be uniquely specified as above, the same holds for the state resulting from applying a Clifford unitary $U$ to $\sigma$: In fact, one can show that the Pauli unitaries $UP_1 U^\dagger , \dots , UP_k U^\dagger$ are independent again, do not generate $-\id_2^{\otimes k}$ and it is easy to see that they stabilize $U\sigma$ (see~\cite[Chapter 10.5.2]{nielsen00} for details). Conversely, for every state of the form $U\e_0^{\otimes k}$, there are $k$ independent Paulis stabilizing it, see~\cite{gross2006hudson} for a proof. Summarizing, we see that the states of the form $\sigma = U\e_0^{\otimes k}$ for a Clifford unitary $U$ are exactly the ones uniquely specified up to global phase by $k$ independent Pauli operators stabilizing it. This explains why we defined a \textit{stabilizer state} as a state of the form $U\e_0^{\otimes k}$ where $U$ is a Clifford unitary in Section~\ref{sec:basicprops}.

The celebrated Gottesman-Knill theorem states that a circuit as in \eqref{eq:basiccircuit} where $U$ is a Clifford unitary can be simulated efficiently if the input state is a stabilizer state  ${\sigma \in \Sp(\V_k)}$.

\begin{theorem}[Gottesman, Knill~\cite{Gottesman:1998hu}]\label{thm:gotknilthm}
Let $\sigma$ be a $k$-qubit stabilizer state specified by $k$ independent Paulis that stabilize it. A quantum circuit $U$ composed only from Clifford gates acting on $\sigma$ followed by measuring an output register consisting of $m \leq k$ qubits in the computational basis can be efficiently simulated both strongly and weakly. More precisely, the complexity of simulating the quantum computation scales quadratically in the number of qubits and linearly in the number of Clifford gates and measurements that are applied.
\end{theorem}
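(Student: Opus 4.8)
The plan is to carry a compact classical description of the state through the computation --- namely, a list of $k$ independent Pauli operators generating its stabilizer group --- and to show that each Clifford gate and each computational-basis measurement induces a polynomial-time update of this list. For the data structure, recall from the discussion preceding Theorem~\ref{thm:gotknilthm} that a $k$-qubit stabilizer state $\sigma \in \Sp(\V_k)$ is pinned down, up to global phase, by generators $P_1,\dots,P_k$ of its stabilizer group, provided the $P_j$ are independent and $\langle P_1,\dots,P_k\rangle$ does not contain $-\id_2^{\otimes k}$. I would store the \emph{tableau} by recording, for each $j$, the $X$-support $a^{(j)} \in \Fk$, the $Z$-support $b^{(j)} \in \Fk$, and a sign bit $s_j \in \field_2$, so that $P_j$ equals $(-1)^{s_j}$ times a fixed product of Pauli $X$'s and $Z$'s; this is $O(k^2)$ bits in total.

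Next I would handle Clifford gates. Since $\mathrm{Cliff}_k$ is generated by $H$, $S$, and $CNOT$, it suffices to record the conjugation action of each on a single Pauli, with the induced sign: $HXH^\dagger = Z$, $HZH^\dagger = X$; $SXS^\dagger = Y$ (with $Y = iXZ$), $SZS^\dagger = Z$; and $CNOT$ sends $X\otimes\id \mapsto X\otimes X$, $\id\otimes X \mapsto \id\otimes X$, $Z\otimes\id \mapsto Z\otimes\id$, $\id\otimes Z \mapsto Z\otimes Z$. Applying a gate on a specified qubit or pair of qubits is then a fixed rule rewriting the one or two relevant columns of the tableau across all $k$ rows and adjusting the sign bits (the adjustments being $\field_2$ inner products of supports, which account for $XZ = -ZX$ and the $i$ from $Y$); this costs $O(k)$ time per gate. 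Preservation of independence and of the ``no $-\id$'' property is automatic, since conjugation by a unitary is a group automorphism of $\setft{Pauli}_k$.

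Third --- the step I expect to be the most delicate --- I would handle a measurement of qubit $j$ in the computational basis, i.e.\ of the observable $Z_j$. Whether $Z_j$ commutes with a generator $P_l$ is read off from a single tableau entry (the value of the $X$-support of $P_l$ at position $j$), so the two cases below are distinguished in $O(k)$ time. If $Z_j$ commutes with every $P_l$, then $\pm Z_j \in \langle P_1,\dots,P_k\rangle$; solving the resulting $\field_2$-linear system to express $Z_j$ as a product of generators, accumulating the sign, determines the outcome, which occurs with probability $1$, and leaves the tableau unchanged. If instead $Z_j$ anticommutes with some $P_a$, the outcome is $\pm 1$ with probability $\tfrac12$ each; after drawing it one replaces every \emph{other} anticommuting generator $P_l$ by $P_l P_a$ and then replaces $P_a$ by $(\pm 1)Z_j$, which restores a valid independent generating set containing $\pm Z_j$. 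Iterating over the $k$ qubits gives a weak simulation; for a strong simulation one post-selects each outcome $x_l$ and multiplies the conditional probabilities --- $1$ in the deterministic case, $\tfrac12$ in the random case --- to obtain $p_{x_1\dots x_k}^\ast p_{x_1\dots x_k}$ from~\eqref{eq:overlap}. Tallying the costs, each one- or two-qubit gate takes $O(k)$ time and each measurement takes $\mathrm{poly}(k)$ time (in fact $O(k^2)$ with the standard destabilizer bookkeeping), so the whole simulation runs in time quadratic in the number of qubits and linear in the number of gates and measurements, as claimed. A careful verification that the measurement update rule is correct --- the one place where the random-versus-deterministic case analysis really has to be checked against the postulates of quantum mechanics --- can be found in~\cite[Chapter~10.5]{nielsen00}.
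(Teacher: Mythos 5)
Your proposal is correct and follows the same route the paper takes: the paper does not prove Theorem~\ref{thm:gotknilthm} itself but defers to~\cite[Section 10.5.4]{nielsen00}, summarizing the argument as a gate-by-gate update of the $k$ independent Pauli stabilizers, which is exactly the tableau simulation you describe. Your sketch fills in the standard details (the conjugation rules for $H$, $S$, $CNOT$ and the two-case measurement update) and, like the paper, ultimately points to~\cite[Chapter 10.5]{nielsen00} for verifying the measurement rule.
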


A proof of Theorem~\ref{thm:gotknilthm} can be found in~\cite[Section 10.5.4]{nielsen00}. We mention that one essentially has to update the Pauli stabilizers gate-by-gate, which yields $k$ independent Paulis stabilizing $U\sigma$. One can read in detail in~\cite[Section 10.5.3]{nielsen00} how one can then efficiently obtain the outcome probabilities from this. \replacement{It is also worth mentioning that the theorem holds more generally if we measure any observable from the Pauli group, not only for measuring in the computational basis.}

Note that if a quantum state $\sigma \in \Sp(\V_k)$ is stabilized by a Pauli unitary $P$, all states of the form $e^{i \theta} \sigma$ for $\theta \in \mathbb{R}$ are also stabilized by $P$. While a global phase does not influence the outcome probabilities of computational basis measurements, it changes the amplitudes of the state. This means that Theorem~\ref{thm:gotknilthm} as stated above does not let us calculate amplitudes but only outcome probabilities. In~\cite{Bravyilowrankstabdecpmps}, the authors describe a way of keeping track of the global phase: They define a classical data format for stabilizer states which they call the \textit{$CH$-form}. We now briefly describe their approach and refer to~\cite{Bravyilowrankstabdecpmps} for more details. Another in-depth discussion can be found in~\cite[Section 2.1.4]{hammam} from where we also borrow our notation. Essentially, one can show that every stabilizer state $\sigma$ is of the form
\begin{equation}
   \sigma \replacement{=\sigma(w,U,h,s)} =  wUH(h) (\e_{s_1} \otimes \dots \otimes \e_{s_k})
\end{equation}
where $w \in \mathbb{C}$, $s, h \in \mathbb{F}_2^k$ and $U$ is a Clifford unitary composed from gates $CNOT, CZ$ and $S$ only. The unitary $H(h)$ is a tensor product of $H$ gates acting on the qubits $i$ such that $h_i = 1$ and $\id_2$ elsewhere. As in the proof of Theorem~\ref{thm:gotknilthm},

 one can now simulate a circuit built from Clifford gates only by updating $w,U, h$ and $s$ gate-by-gate. \replacement{The authors show in~\cite{Bravyilowrankstabdecpmps} that this update can be done with computational cost at most quadratic in $k$: Updating the CH form after applying a global phase has constant computational cost, a $CNOT$, $CZ$, or $S$ has linear cost in $k$, and a Hadamard gate has quadratic cost in $k$. Moreover, they show that given a stabilizer state $\sigma\in \Sp(\V_k)$ in $CH$-form, one can calculate the amplitude $(\e_{x_1}^* \otimes \dots \otimes \e_{x_k}^*) \sigma$ with computational cost quadratic in the number of qubits. Consequently, one can calculate the overlap}
 \begin{equation*}
     \sigma(w,U,h,s)^*\sigma(w',U',h',s') = \overline{w}\cdot  w'\cdot  (\e_{s_1}^* \otimes \dots \otimes \e_{s_k}^*) H(h) U^\dagger \sigma(w',U',h',s')
 \end{equation*}
\replacement{of two stabilizer states given in $CH$-form with computational cost at most cubic in the number of qubits by first updating the $CH$-form of $H(h) U^\dagger \sigma(w',U',h',s')$ and then calculating the overlap with $(\e_{s_1} \otimes \dots \otimes \e_{s_k})$. Finally, for a stabilizer state $\sigma(w,U,h,s)$ given in $CH$-form, an integer $m \leq k$, and an orthogonal projector $\Pi_{x_1 \dots x_m}$ onto ${\spn\{\e_{x_1}\otimes \dots \otimes \e_{x_m}\}}$ for some bitstring $x_1\dots x_m$, one can calculate the $CH$-form of the stabilizer state }
 \begin{equation*}
     \left(\Pi_{x_1 \dots x_m} \otimes \id_2^{\otimes k-m} \right) \sigma(w,U,h,s)
 \end{equation*}
\replacement{with compuational cost quadratic in $k$.}

With that, we can now understand the importance of the stabilizer rank as a measure of the computational cost of strong simulation of quantum circuits with general input states. Say, we want to strongly simulate a quantum circuit as in \eqref{eq:basiccircuit} where $\psi$ is a $k$-qubit quantum state and $U$ is a Clifford unitary. Say furthermore that we can decompose 
\begin{equation}\label{eq:stabdecomp}
    \psi = \sum_{i = 1}^r c_i \sigma_i
\end{equation}
where the $\sigma_i$ are stabilizer states each specified in $CH$-form and the $c_i$ are complex numbers. \replacement{Given a bitstring $x_1 \dots x_m$, we want to calculate} 
\begin{align*}
   P(x_1 \dots x_m) = \psi^* U^\dagger \left( \Pi_{x_1 \dots x_m} \otimes \id \right) U \ket{\psi} = \sum_{i,j = 1}^r \overline{c_i}\cdot c_j\cdot \sigma_i^* U^\dagger \left(\Pi_{x_1 \dots x_m} \otimes \id_2^{\otimes k-m}\right) U \sigma_j.
\end{align*}
\replacement{We can do so by first updating the $CH$-form of $U\sigma_i$ and $(\Pi_{x_1 \dots x_m} \otimes \id_2^{\otimes k-m}) U \sigma_j$ followed by calculating $r^2$ overlaps between stabilizer states. Since for every of the summands the computational cost is linear in the number of gates of which $U$ is composed and cubic in the number of qubits, the whole cost is cubic in the number of qubits, linear in the number of Clifford gates and quadratic in the number of terms $r$ appearing in the stabilizer decomposition.}

As already done in Section \ref{sec:basicprops}, we call the minimal $r$ such that there exists a decomposition as in equation \eqref{eq:stabdecomp} the \textit{stabilizer rank} of $\psi$ and denote it $\chi(\psi)$. It follows that lower bounds on $\chi(\psi)$ imply lower bounds on the complexity of simulating a circuit of the form \eqref{eq:basiccircuit} using the approach we have just described. On the other hand, finding upper bounds on $\chi(\psi)$ by providing a decomposition as in \eqref{eq:stabdecomp} can give us better simulation algorithms for circuits of the form \eqref{eq:basiccircuit}.

\replacement{For a discussion of a weak simulation protocol and an $\epsilon$-strong simulation protocol making use of stabilizer rank decompositions see \cite[Section 4]{Bravyilowrankstabdecpmps}. We mention that there the authors deduce that the computational cost of weak simulation scales linearly in the approximate stabilizer rank of $\psi$ and polynomially in the number of qubits and the number of gates applied; and that the cost of $\epsilon$-strong simulation scales linearly in the exact stabilizer rank of $\psi$ and polynomially in the number of qubits and the number of gates applied.}

As an application of the connection between the stabilizer rank of a state $\psi$ and the classical simulation cost of applying Clifford circuits and computational basis measurements to $\psi$, we will now see how low-rank stabilizer decompositions of $n$ copies of the so-called $T$-state can also be used to simulate circuits built from a universal gate set. This strategy has been used for instance in~\cite{BravyiImprovedSimDominated}. \replacement{For simplicity, we will only consider the case here where all qubits in the circuit in \eqref{eq:basiccircuit} are measured in computational basis, that is, the size $m$ of the output register is equal to the number of qubits $k$.}

Recall that a gate set $\mathcal{G}$ is called \textit{universal} if every unitary $U$ on $k$ qubits can be approximated arbitrarily well by unitaries $U_\mathcal{G}$ composed solely of gates in $\mathcal{G}$. It is well-known that the set of Clifford gates is not universal but, together with the so-called $T$-gate 

\begin{equation}\label{eq:Tgate}
T = \begin{pmatrix}1&0\\0&e^{i\frac{\pi}{4}}\end{pmatrix},
\end{equation}
they form a universal gate set~\cite{Boykin99onuniversal}.

Under the assumption that our quantum device can only prepare a computational basis state $\e_0^{\otimes k}$, apply Clifford$+T$ operations and measure the qubits in computational basis, the $T$-gates appear, by the preceeding discussion, to be responsible for the potential superiority of the quantum device. It is therefore an interesting question how efficiently we can simulate a quantum computation on $k$ qubits which, in addition to Clifford gates, uses $n$ single-qubit $T$-gates.

A standard way of approaching this question is via the study of \textit{magic state injection}~\cite{PhysRevA.71.022316}. For this, we note that for the magic state $T = \frac{1}{\sqrt{2}}\left(\mathrm{e}_0 + e^{i\frac{\pi}{4}} \mathrm{e}_1\right)$, applying a $T$-gate to any qubit state $\psi$ is the same as applying the circuit

\begin{equation}\label{eq:injection}
    \Qcircuit @C=.5em @R=0.7em  {
      &\lstick{\psi}&\qw&\qw&\qw&\qw&\qw&\qw &\qw &\ctrl{1} &\qw &\qw &\qw &\gate{S} &\qw &\qw \\
    &&&&&&&&\lstick{T}\qw&\targ&\qw &\meter &\cw &\controlo \cw \cwx \gategroup{1}{5}{2}{14}{1.3em}{--}&& 
    }
\end{equation}
where the double wire denotes a \textit{classical control}: The $S$ gate on the first system is applied if and only if the outcome of the computational basis measurement on the second system is 1. Therefore, any $k$-qubit quantum circuit

\begin{equation}\label{eq:basiccircuitwitht}
    \Qcircuit @C=.5em @R=.8em {
    & & \qw & \multigate{2}{V} & \qw &\meter\\
    &\lstick{\e_0^{\otimes k\;}\Bigg\lbrace} & \raisebox{4pt}{\vdots} & & &\raisebox{4pt}{\vdots}\\
    & & \qw & \ghost{V} & \qw &\meter
    }
\end{equation}
where $V$ is composed of Clifford gates and $n$ $T$-gates, can be implemented with a circuit composed from Clifford gates and classical controls acting on the state $\mathrm{e}_0^{\otimes k} \otimes T^{\otimes n}$ by replacing each $T$-gate with the gadget in \eqref{eq:injection}. By \textit{postselecting} outcomes $0$ for each measurement on a $T$-state in this circuit, we see that 
\begin{equation}\label{eq:postselect}
    V \e_0^{\otimes k} = {2^{n/2}}\left(\id^{\otimes k} \otimes (\e_0^*)^{\otimes n}\right) U \left(e_0^{\otimes k} \otimes T^{\otimes n}\right) 
\end{equation}
where $U$ is composed from Clifford gates only (more precisely, the Clifford gates from $V$ plus an additional $CNOT$ gate for each injected $T$-state).

With that, it follows that the probability of outcome $x_1 \dots x_k$ in the circuit in \eqref{eq:basiccircuitwitht} is given by $\abs{p_{x_1\dots x_k}}^2$, where 

\begin{align}
  p_{x_1 \dots x_k} &=\left(\e_{x_1}^* \otimes \dots \otimes \e_{x_k}^*\right) V \e_0^{\otimes k}\\ &={2^{n/2}}\left(\e_{x_1}^* \otimes \dots \otimes \e_{x_k}^* \otimes (\e_0^*)^{\otimes n}  \right)U \left((\e_0)^{\otimes k} \otimes T^{\otimes n}\right).\label{eq:probabilitywitht}
\end{align}
Note that if there are stabilizer states $\sigma_1,\dots, \sigma_r$ and complex numbers $c_1,\dots, c_r \in \complex$ such that $\psi=\sum_{i=1}^r c_i \sigma_i$, then the quantity in \eqref{eq:probabilitywitht} decomposes by linearity as 

\begin{equation}\label{eq:probabilitywithstb}
  p_{x_1 \dots x_k} = 2^{n/2} \sum_{i=1}^r c_i \left(\e_{x_1}^* \otimes \dots \otimes \e_{x_k}^* \otimes (\e_0^*)^{\otimes n}  \right)U \left((\e_0)^{\otimes k} \otimes \sigma_i\right).
\end{equation}
All summands in \eqref{eq:probabilitywithstb} can be calculated efficiently using the $CH$-form as discussed before: They are amplitudes of the outcome of a quantum circuit composed of Clifford gates acting on a stabilizer state. 

Summarizing, every stabilizer decomposition of $T^{\otimes n}$ yields a way to strongly simulate Clifford+$T$ circuits consisting of $n$ $T$-gates. The complexity of this simulation scales linearly in the number of terms appearing in the stabilizer decomposition and polynomially in all other parameters. With this, it follows that finding decompositions of $T^{\otimes n}$ into few stabilizer states can reduce the complexity of simulating such circuits. Lower bounds on $\chi(T^{\otimes n})$, on the other hand, translate directly into lower bounds on the cost of simulating quantum circuits using these methods, namely, stabilizer decompositions and $CH$-forms.

\end{appendix}

\end{document}